\theoremstyle{definition}
\newtheorem{definition}{Definition}
\newtheorem{theorem}{Theorem}
\newtheorem{lemma}[theorem]{Lemma}
\newtheorem{claim}[theorem]{Claim}
\newcommand{\cov}{\textsf{cov}\xspace}
\newcommand{\val}{\textsf{val}\xspace}
\newcommand{\flow}{\textsf{flow}\xspace}
\newcommand{\pkso}{\textsf{P$k$SO}\xspace}
\newcommand{\wkpp}{\textsf{W$k$PP}\xspace}
\newcommand{\pknapso}{\textsf{PKnapSO}\xspace}
\newcommand{\wknappp}{\textsf{WKnapPP}\xspace}
\newcommand{\pcov}{\mathscr{P}_{\cov}}
\newcommand{\pcks}{\textsf{PC$k$S}\xspace}
\newcommand{\wckpp}{\textsf{WC$k$PP}\xspace}
\newcommand{\upcks}{\textsf{UPC$k$S}\xspace}
\begin{document}

\title{Colorful Priority $k$-Supplier}
\author{
Chandra Chekuri\thanks{Dept. of Computer Science, University of Illinois, Urbana-Champaign, IL 61820. Work on this paper partially supported by NSF grant CCF-1910149. {\tt chekuri@illinois.edu}}
\and
Junkai Song\thanks{The University of Hong Kong. Part of this work was done while visiting University of Illinois, Urbana-Champaign. {\tt dsgsjk@connect.hku.hk}}
}
\date{\today}

\maketitle

\begin{abstract}
In the \emph{Priority $k$-Supplier} problem the input consists of a metric space $(F \cup C, d)$ over set of facilities $F$ and a set of clients $C$, an integer $k > 0$, and 
a non-negative radius $r_v$ for each client $v \in C$. The goal is to select $k$ facilities
$S \subseteq F$ to minimize $\max_{v \in C} \frac{d(v,S)}{r_v}$ where $d(v,S)$ is the distance of $v$ to the closes facility in $S$. This problem generalizes the
well-studied $k$-Center and $k$-Supplier problems, and admits a $3$-approximation \cite{plesnik1987heuristic,bajpai2022revisiting}. In this paper we consider two outlier versions. 
The \emph{Priority $k$-Supplier with Outliers} problem \cite{bajpai2022revisiting} allows a specified number of outliers to be uncovered, and the \emph{Priority Colorful $k$-Supplier} problem is a further generalization where clients are partitioned into $c$ colors and each color class allows a specified number of outliers. These problems are partly motivated by recent interest in fairness in clustering and other optimization problems involving algorithmic decision making.

We build upon the work of \cite{bajpai2022revisiting} and improve their $9$-approximation \emph{Priority $k$-Supplier with Outliers} problem to a $1+3\sqrt{3}\approx 6.196$-approximation. For the \emph{Priority Colorful $k$-Supplier} problem, we present the first set of approximation algorithms. For the general case with $c$ colors, we achieve a $17$-pseudo-approximation using $k+2c-1$ centers. For the setting of $c=2$, we obtain a 
$7$-approximation in random polynomial time, and
a $2+\sqrt{5}\approx 4.236$-pseudo-approximation using $k+1$ centers.
\end{abstract}

\section{Introduction}
The discrete $k$-Center problem is a well-studied clustering problem with several applications. The input consists a finite metric space $(X,d)$ and an integer $k$. The goal is
to choose $k$ centers $S \subseteq X$ to minimize $\max_{v \in X} d(v,S)$ where
$d(v,S)$ is the distance of $v$ to the set $S$ (defined as $\min_{u \in S} d(u,v)$).
This is an NP-Hard problem that admits a $2$-approximation algorithm \cite{hochbaum1986unified,gonzalez85} and moreover this approximation ratio is tight assuming $P \neq NP$ \cite{hochbaum1986unified}. The $k$-Supplier problem is
a generalization in which $X = F \uplus C$ with $F$ denoting a set of facilities
and $C$ denoting a set of clients. Now the goal is to find a set $S$ of $k$ facilities
to minimize $\max_{v \in C} d(v,S)$. The $k$-Supplier problem admits a $3$-approximation and this factor is also tight under $P \neq NP$ \cite{hochbaum1986unified}.
Plesn\'ik \cite{plesnik1987heuristic} considered \emph{priority} versions of these
problems. The input now consists of radius requirement $r_v$ for each client $v \in C$. The goal is to select centers to minimize $\max_{v \in C} \frac{d(v,S)}{r_v}$.~\footnote{Plesn\'ik used the terminology ``weighted" $k$-Center for his problem.  \cite{bajpai2022revisiting} et al.\ used the terminology of ``priority" to avoid confusion with another problem, and due to the reformulation via the notion of client
radii which correspond to the inverse of weights.} 
Plean\'ik generalized the $2$-approximation algorithm for $k$-Center of \cite{hochbaum1986unified} to Priority $k$-Center. Bajpai et al. \cite{bajpai2022revisiting} recently observed that Plesn\'ik's algorithm and analysis
can be extended to obtain a $3$-approximation for Priority $k$-Supplier
and some further generalizations that constrain the centers in more complex ways (matroid 
and knapsack constraints).

Bajpai et al. \cite{bajpai2022revisiting} also considered the \emph{outlier} versions of priority clustering problems. They were motivated by recent interest in 
incorporating fairness in clustering objectives.
In the \emph{Priority $k$-Supplier with Outliers} (\pkso) problem \cite{bajpai2022revisiting}, an additional parameter $m$ is given and the goal is to cover at least $m$ clients. That is, minimize $\alpha$ such that for at least $m$ clients $v\in C$, $d(v,S)\leq \alpha\cdot r_v$. In this paper
we consider a further generalization, the \emph{Priority Colorful $k$-Supplier} (\pcks) problem. In this version, the client set $C$ is partitioned into $c$ colors $\{C_1,C_2,\dots,C_c\}$ and $c$ integer parameters $m_1,m_2,\dots,m_c$ are given. The goal is to cover at least $m_i$ clients from each color $i$. That is, minimize $\alpha$ such that for at least $m_i$ clients $v\in C_i$, $d(v,S)\leq \alpha\cdot r_v$ for any $1\leq i\leq c$. The colorful version of $k$-Center was introduced in 
\cite{bandyapadhyay2019constant} and has seen several results \cite{anegg2022techniques,jia2022fair}.
As far as we know, \pcks has not been formally studied in previous work.

For the $k$-Center with Outlier problem, Charikar et al~\cite{charikar2001algorithms} obtained a clever greedy $3$-approximation algorithm. The approximation ratio was later improved to $2$ due to Chakrabarty et al~\cite{chakrabarty2020non} using an LP-based approach. The Colorful $k$-center probelm is proposed by Bandyapadhyay et al~\cite{bandyapadhyay2019constant} and they introduced a pseudo-approximation algorithm that yields a $3$-approximation using at most $k+c-1$ centers. Subsequently, Jia et al~\cite{jia2022fair} obtained a true $3$-approximation algorithm when 
$c$ is a fixed constant. Anegg et al~\cite{anegg2022techniques} obtain an $O(1)$-approximation algorithm when $c$ is a fixed constant for generalizations including
matroid and knapsack versions.

The Priority $k$-Supplier with Outliers (\pkso) problem is studied in~\cite{bajpai2022revisiting}. In the paper, the authors made use of the LP-based framework from~\cite{chakrabarty2020non,10.1145/3338513} and obtained a $9$-approximation algorithm. Futhermore, they showed that this result can be extened to the matroid and the knapsack settings, obtaining a $9$-approximation and a $17$-approximation respectively\footnote{The approximation ratio for the knapsack version claimed in \cite{bajpai2022revisiting} is $14$ but there is a minor error in the analysis
that when corrected yields a slightly weaker bound of $17$. See Section~\ref{sec:pknapso} for detailed discussion.}.  In terms of hardness of approximation, we only know a hardness factor of $3$ which is inherited from the $k$-Supplier problem. There is a natural LP relaxation for the problem and no factor worse than $3$ is known on its integrality gap.

\paragraph{Results and ideas:} For \pkso we improve upon the algorithm in~\cite{bajpai2022revisiting}, deriving $1+3\sqrt{3}\approx6.196$-approximation. \cite{bajpai2022revisiting} considered
special cases when the number of distinct client radii is small. They obtained a $3$-approximation for two distinct radii (which is tight), 
and a $5$-approximation for three radii. We improve the approximation ratio for three radii
from $5$ to $3.94$. When the radii are powers of $b$, the approximation ratio is improved from $\frac{3b-1}{b-1}$ to $\frac{3b^2-1}{b^2-1}$. 

For \pcks there were no previous results. We derive a $17$-pseudo-approximation 
that uses $k+2c-1$ centers. For a special case of \pcks where there are only $2$ colors and the vertices with the same color has a same radius, we obtain a $2+\sqrt{5}\approx4.236$-pseudo-approximation using $k+1$ centers. In addition, based on the framework due to~\cite{anegg2022techniques}, we obtain a true $7$-approximation in randomized polynomial time.

Our results build on the framework in \cite{bajpai2022revisiting} that has several ingredients: (i) solving an LP relaxation to obtain a fractional solution (ii) rounding the client radii to powers of $b$ for some parameter $b > 1$ to create well-separated radius classes (iii) pre-processing the clients in each radius class based on the LP solution to create well-separated cluster centers in each radius class (iv) creating a contact DAG on the cluster centers, distances between them, and the coverage information from the LP solution, and solving an auxiliary flow problem that allows the algorithm to identify the centers that form the final output. For the knapsack constraint, instead of using a contact DAG, the framework creates a contact forest and uses dynamic programming on the forest and a cut-and-round approach to solve the underlying LP relaxation --- this more complex approach is needed since the underlying flow problem on the contact DAG would not yield an integer polytope/solution. For \pkso we obtain an improved approximation by altering the way in which the contact DAG is created and balancing the parameters based on this different approach. We note that this optimization is specific to \pkso and does not extend to the the more general matroid constraint. For \pcks we build on the the insight from \cite{bandyapadhyay2019constant} who showed that one can use the LP solution to process the given instance to obtain a feasible solution to a related LP that has a small number of constraints; one can then take a basic feasible solution of this new LP which yields a pseudo-approximation for the original problem. In our setting the priorities create a non-trivial challenge. We use the LP solution to a create a contact forest (we borrow the approach from the knapsack version of \pkso that we already mentioned) --- although the approximation ratio is worse if we use a contact forest instead of a contact DAG, we take advantage of the forest structure to show the existence of a basic feasible solution 
to an auxiliary LP that has only a few fractional values. This gives us the desired pseudo-approximation.

\smallskip
Tabel~\ref{tbl:results} summarizes the approximation ratios from the past work and this paper.

\begin{table}[ht]
    \label{tbl:results}
    \begin{center}
    \begin{tabular}{|c|cc|}   
    \hline   \textbf{Problems} & \multicolumn{1}{c|}{\textbf{Prior}} & \textbf{This Paper} \\
\hline   Priority $k$-Center & \multicolumn{1}{c|}{$2$~\cite{plesnik1987heuristic}} & \\
    \hline   Priority $k$-Supplier & \multicolumn{1}{c|}{$3$~\cite{bajpai2022revisiting}} & \\
    \hline   $k$-Center with Outliers & \multicolumn{1}{c|}{$2$~\cite{chakrabarty2020non}} & \\
    \hline   $k$-Supplier with Outliers & \multicolumn{1}{c|}{$3$~\cite{chakrabarty2020non}} & $3$ \\
    \hline   Priority $k$-Supplier with Outliers & & \\
    \hline   General & \multicolumn{1}{c|}{$9$~\cite{bajpai2022revisiting}} & $1+3\sqrt{3}\approx 6.196$ \\
    \hline 2 types of radii  & \multicolumn{1}{c|}{$3$~\cite{bajpai2022revisiting}} & $3$ \\
    \hline 3 types of radii  & \multicolumn{1}{c|}{$5$~\cite{bajpai2022revisiting}} & $3.94$ \\
    \hline Radii are powers of $b$ & \multicolumn{1}{c|}{$\frac{3b-1}{b-1}$~\cite{bajpai2022revisiting}} & $\frac{3b^2-1}{b^2-1}$ \\
    \hline   Priority Knapsack Supplier with Outliers & \multicolumn{1}{c|}{$14$~\cite{bajpai2022revisiting}} & $17$ \\
    \hline   Colorful $k$-Center& & \\
    \hline   \multirow{2}{*}{2 colors} & \multicolumn{1}{c|}{$(2,k+1)$~\cite{bandyapadhyay2019constant}} & \multirow{2}{*}{} \\
    \cline{2-3} & \multicolumn{1}{c|}{$(3,k)$~\cite{jia2022fair}} & \\
    \hline \multirow{2}{*}{$c$ colors} & \multicolumn{1}{c|}{$(2,k+c-1)$~\cite{bandyapadhyay2019constant}} & \multirow{2}{*}{} \\
    \cline{2-3} & \multicolumn{1}{c|}{$(O(1),k)$~\cite{anegg2022techniques}} & \\
    \hline Priority Colorful $k$-Supplier& & \\
    \hline   \multirow{2}{*}{2 colors \& same color, same radius} & \multirow{2}{*}{} & \multicolumn{1}{|c|}{$(2+\sqrt{5},k+1)$} \\
    \cline{2-3} & & \multicolumn{1}{|c|}{$(7,k)$}\\
    \hline $c$ colors & \multicolumn{1}{c|}{} & $(17,k+2c-1)$\\
    \hline
    \end{tabular}
    \caption{Summary of known and new results.}
    \label{table:result}
    \end{center}   
\end{table}


\paragraph{Organization:} Section~\ref{sec:prelims} contains formal definitions of the problems, the LP relaxation we need and a basic filitering technique.
Section~\ref{sec:pkso} describes our improvement for \pkso. Section~\ref{sec:pknapso} describes the Knapsack version of \pkso from \cite{bajpai2022revisiting}, for the sake of completeness, since we need the details of that approach. Section~\ref{sec:pcks} describes our algorithms for \pkso. We conclude in Section~\ref{sec:concl} with a discussion of some open problems.

\section{Preliminaries}
\label{sec:prelims}
\begin{definition}[Priority $k$-Supplier with Outliers (\pkso)]
The input is a metric space $(X=C\cup F,d)$, a radius function $r: C\to \mathbb{R}_{> 0}$, and parameters $k,m\in \mathbb{N}$. The goal is to find $S\subseteq F$ of size at most $k$ to minimize $\alpha$ such that for at least $m$ vertices $v\in C$, $d(v,S)\leq \alpha \cdot r_v$.
\end{definition}

Since there are only polynomially many choices for an optimal $\alpha$, we restrict our attention to the decision version of \pkso and assume that 
the optimal value $\alpha^*$ is equal to $1$ by appropriately scaling the radii. We further assume that for each $v\in C$, there exists some $f\in F$ such that $d(f,v)\leq r_v$, otherwise we have a certificate that $\alpha^* > 1$.

The following is the natural LP relaxation for the decision version for of \pkso with $\alpha=1$. $x_f$ indicates how much a facility $f$ is opened as a center and $\cov(v)$ indicates how much a vertex $v$ is covered.
\begin{align}
    & \sum_{v\in C} \cov(v) \geq m \tag{\pkso LP}\\
    & \sum_{f\in F} x_f \leq k \notag\\
    & \cov(v) = \min(\sum_{\substack{f \in F:\\ d(f,v) \leq r_v}} x_f, 1) & \forall v\in C \notag\\
    & 0 \leq x_f  \leq 1 & \forall f\in F \notag\\
\end{align}

In the preceding formulation the constraint $\cov(v) = \min(\sum_{\substack{f \in F:\\ d(f,v) \leq r_v}} x_f, 1)$ is not linear but can be replaced with two linear
constraints; we keep this notation for ease of understanding.

Filter is a standard procedure in clustering problems to partition the vertex set into several well-separated clusters. In Algorithm~\ref{alg:filter}, the vertices are ordered according to their $\cov$ values, and a set of representatives $R$ and corresponding clusters $\{D(v):v\in R\}$ are obtained. Similar algorithms are also used in \cite{hochbaum1986unified} and \cite{plesnik1987heuristic}: in \cite{hochbaum1986unified}, the vertex ordering is arbitrary, and in \cite{plesnik1987heuristic}, the vertices are sorted based on their radii.

\begin{algorithm}[ht]
    \caption{Filter}
    \label{alg:filter}
	\begin{algorithmic}[1]
		\Require Metric $(X=C\cup F,d)$, radius function $r$, and LP solution $\cov$
		\State $U \leftarrow C$
		\State $R \leftarrow \emptyset$
		\While{$U \neq \emptyset$}
	        \State $v \leftarrow \arg\max_{u\in U} \cov(u)$
	        \State $R \leftarrow R \cup \{v\}$
	        \State $D(v) \leftarrow \{u \in U: d(u,v) \leq r_u + r_v\}$
	        \State $U \leftarrow U \setminus D(v)$
		\EndWhile
		\Ensure $R$, $\{D(v):v \in R\}$
	\end{algorithmic}
\end{algorithm}

\begin{lemma}
\label{lem:filter}
The output of Algorithm~\ref{alg:filter} has the following properties:\\
(a) $\{D(v):v\in R\}$ is a partition of $C$.\\
(b) $\forall u,v\in R, d(u,v)>r_u+r_v$.\\
(c) $\forall v\in R,u\in D(v),d(u,v)\leq r_u+r_v$ and $\cov(v)\geq \cov(u)$.
\end{lemma}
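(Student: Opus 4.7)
The plan is to prove the three properties in the order (a), (c), (b), reading them off from the invariants maintained by the \textsc{Filter} loop. The main object to track is the set $U$ at each iteration: once a vertex is removed from $U$, it is never reinserted, and future clusters $D(w)$ are formed only from the current $U$, so membership in clusters and in $R$ is determined by when each vertex is first ``absorbed.''

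For part (a), I would first note termination: in the iteration when $v$ is chosen, we have $v\in D(v)$ (since $d(v,v)=0\le r_v+r_v$), so $|U|$ strictly decreases, and after finitely many iterations $U=\emptyset$. Thus every $u\in C$ lies in some $D(v)$ (the one formed in the iteration when $u$ is removed from $U$). Disjointness is immediate from the line $U\leftarrow U\setminus D(v)$ together with the fact that subsequent $D(w)$'s are subsets of the updated $U$. This gives a partition.

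For part (c), fix $v\in R$ and $u\in D(v)$. The inequality $d(u,v)\le r_u+r_v$ is the defining condition of $D(v)$, so that half is by definition. For $\cov(v)\ge\cov(u)$, note that at the moment $v$ is added to $R$ we pick $v=\arg\max_{w\in U}\cov(w)$ from the current $U$. Since $u\in D(v)$ and $D(v)$ is taken from the current $U$, in particular $u\in U$ at that moment, so by the choice of $v$ we have $\cov(v)\ge\cov(u)$.

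For part (b), take distinct $u,v\in R$ and assume without loss of generality that $v$ is added to $R$ before $u$. Since $u$ is added later, $u$ must still lie in $U$ at the start of the iteration that selects $v$. If $d(u,v)\le r_u+r_v$ held, then $u$ would satisfy the defining condition of $D(v)$ and thus be removed from $U$ in that iteration, contradicting the fact that $u$ is selected at a later iteration (and so must still be in $U$ at its own selection time). Hence $d(u,v)>r_u+r_v$. The only subtle point is this ``still in $U$'' bookkeeping, but it follows directly from the fact that $U$ only shrinks and that $R$-membership coincides with the iteration in which the vertex is removed; I do not expect any real obstacle beyond stating the invariant carefully.
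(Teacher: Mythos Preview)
Your argument is correct and complete. The paper itself does not give a proof of this lemma; it is stated as a standard property of the filtering procedure and used freely thereafter. Your careful tracking of the invariant on $U$ (that it only shrinks, and that each $D(v)$ is carved out of the current $U$) is exactly the right way to justify all three parts, and there are no gaps.
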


To warm up, we provide a 3-approximation algorithm for $k$-Supplier with Outliers from~\cite{chakrabarty2020non,bajpai2022revisiting}. Note that in the non-priority setting, the radius of every vertex is the same.

\begin{theorem}
\label{thm:k-outlier}
There is a $3$-approximation algorithm for $k$-Supplier with Outliers.
\end{theorem}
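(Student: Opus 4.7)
The plan is to follow the standard LP-based framework: solve the LP relaxation, apply Algorithm~\ref{alg:filter} to obtain well-separated clusters, and then greedily choose cluster representatives by cluster size to maximize coverage. After scaling so that $\alpha^*=1$, in the non-priority setting we have $r_v=1$ for every $v\in C$, so Lemma~\ref{lem:filter}(b) tells us that any two representatives are at distance strictly greater than $2$, which means the unit balls $B(v,1)$ for $v\in R$ are pairwise disjoint.

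First I would solve the LP, run Filter to obtain $R$ and the clusters $\{D(v):v\in R\}$, and restrict attention to $R' = \{v\in R : \cov(v)>0\}$. For each $v\in R'$ the LP has positive mass on some facility $f_v$ with $d(f_v,v)\le 1$, which I earmark. Using disjointness of the balls I get $\sum_{v\in R'} \cov(v) \le \sum_{v\in R'} x(B(v,1)) \le \sum_f x_f \le k$, and of course $\cov(v)\le 1$ for each $v$. I then sort $R'$ by $|D(v)|$ in decreasing order, let $S$ be the top $\min(k,|R'|)$ representatives, and output $\{f_v:v\in S\}$.

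To argue the guarantee I separate the \emph{distance} bound from the \emph{count} bound. For the distance bound, any $u\in D(v)$ with $v\in S$ satisfies $d(u,f_v)\le d(u,v)+d(v,f_v)\le 2+1=3$ by Lemma~\ref{lem:filter}(c) and the triangle inequality, giving approximation ratio $3$. For the count bound, Lemma~\ref{lem:filter}(c) implies $\cov(v)\cdot|D(v)|\ge\sum_{u\in D(v)}\cov(u)$, and since every client in a cluster of $R\setminus R'$ has $\cov=0$, summing gives $\sum_{v\in R'}\cov(v)\cdot|D(v)|\ge\sum_{u\in C}\cov(u)\ge m$. It remains to show $\sum_{v\in S}|D(v)|\ge\sum_{v\in R'}\cov(v)\cdot|D(v)|$, which is a one-line exchange argument: with $y_v=\cov(v)\in[0,1]$ and $\sum y_v\le k$, let $a^*=\min_{v\in S}|D(v)|$; then $\sum_{v\in S}(1-y_v)|D(v)|\ge a^*\sum_{v\in S}(1-y_v)$ and $\sum_{v\in R'\setminus S}y_v|D(v)|\le a^*\sum_{v\in R'\setminus S}y_v$, so the difference is at least $a^*(k-\sum_v y_v)\ge 0$.

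The main thing to be careful about is the boundary case $|R'|\le k$, where $S=R'$ and we simply open a facility near every nontrivial representative; the count bound then reduces to $\sum_{v\in R'}|D(v)|\ge\sum_{v\in R'}\cov(v)\cdot|D(v)|$, which follows immediately from $\cov(v)\le 1$. I do not expect any genuine obstacle beyond this bookkeeping; the real content is the two properties from Lemma~\ref{lem:filter} combined with ball-disjointness, which will also serve as the template for the more intricate priority and colorful variants considered later in the paper.
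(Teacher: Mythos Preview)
Your proof is correct and follows essentially the same approach as the paper: solve the LP, run Filter, show $\sum_{v\in R}\cov(v)\le k$ via ball-disjointness, show $\sum_{v\in R}|D(v)|\cov(v)\ge m$ via Lemma~\ref{lem:filter}(c), pick the top-$k$ clusters by size, and bound the distance by $3$ via triangle inequality. The only cosmetic difference is that the paper packages the ``top-$k$ is optimal'' step as integrality of the uniform matroid polytope for the auxiliary LP $\max\sum |D(v)|y_v$ subject to $\sum y_v\le k$, $0\le y_v\le 1$, while you spell out the exchange argument directly; your restriction to $R'=\{v:\cov(v)>0\}$ is harmless but unnecessary, since the paper already assumes every client has a facility within its radius.
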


\begin{proof}
We first obtain a solution of \pkso LP and run Algorithm~\ref{alg:filter}. Consider the following auxiliary LP where $y_v$ indicates how much $v$ is opened as a center (note that $v$ is a client here, by opening $v$ we mean open some $f\in F$ where $d(f,v)\leq r_v$):
\begin{align*}
\max \quad & \sum_{v\in R} |D(v)|y_v\\
& \sum_{v\in R} y_v \leq k\\
& 0 \leq y_v \leq 1 & \forall v\in R
\end{align*}

We claim that $y_v=\cov(v)$ is a feasible fractional solution with objective value at least $m$. For every $f\in F$, let $A_f=\{v\in C:d(f,v)\leq r_v\}$ be the set of vertices $f$ can cover. Note that $|A_f\cap R|\leq 1$ since otherwise for $u,v\in A_f\cap R$, we have $d(u,v)\leq d(u,f)+d(f,v)\leq r_u+r_v$, which contradicts to Lemma~\ref{lem:filter}(b). Hence
\begin{align*}
    \sum_{v\in R} \cov(v)& = \sum_{v\in R} \sum_{\substack{f \in F:\\ d(f,v) \leq r_v}} x_f\\
    & = \sum_{v\in R}\sum_{f:v\in A_f} x_f\\
    & = \sum_{f\in F} |A_f\cap R| x_f\\
    & \leq \sum_{f\in F} x_f\leq k.
\end{align*}
Hence $y_v=\cov(v)$ is feasible. On the other hand, we have
\begin{align*}
    \sum_{v\in R} |D(v)|\cov(v) 
    & \geq \sum_{v \in R} \sum_{u\in D(v)} \cov(u) & \text{(By Lemma~\ref{lem:filter}(c))}\\
    & = \sum_{v\in C} \cov(v) & \text{(By Lemma~\ref{lem:filter}(a))}\\
    & \geq m.
\end{align*}

The auxiliary LP defines an integer polyhedron since the constraint system corresponds to that of a uniform matroid of rank $k$ (see \cite{Schrijver-book}).
Since there exists a fractional solution with objective value at least $m$, we can obtain an integral solution with objective value at least $m$ by choosing $k$ sets from $\{D(v):v\in R\}$ with largest cardinality. For each chosen $v$, open any facility $f$ such that $d(f,v)\leq r_v$. For all chosen $v$, every $u\in D(v)$ is covered within $3$ times its radius since
\[
    d(u,f)\leq d(u,v)+d(v,f)\leq (r_u+r_v)+r_v = 3r_u.
\]

Therefore, at least $m$ vertices are covered using at most $k$ facilities within 3 times the radius.
\end{proof}

\section{Priority $k$-Supplier with Outliers}
\label{sec:pkso}

In this section, we introduce our improved algorithm utilizing the framework in~\cite{bajpai2022revisiting}, which gives $(1+3\sqrt{3})$-approximation for \pkso. Part of the section is replicating~\cite{bajpai2022revisiting} for the sake of completeness.

\begin{definition}[Weighted $k$-Path Packing (\wkpp)]
The input is a DAG $G=(V,E)$, a value function $\lambda: V\to \{0,1,\dots,n\}$ for some integer $n$, and a parameter $k$. The goal is to find a set of $k$ paths $P\subseteq \mathcal{P}(G)$ that maximizes:
\[
    \val(P)=\sum_{v\in \bigcup_{p\in P} p} \lambda(v).
\]
\end{definition}

The problem is polynomial time solvable by a reduction to Min-Cost Max-Flow. We build a new graph $G'=(V',E')$ based on $G=(V,E)$. For each vertex $v\in V$, split it into two copies $v_1,v_2$. This is to ensure each vertex is count only once even if it is covered by more than one paths.

We set $V'=\bigcup_{v\in V}\{v_1,v_2\} \cup \{s,t\}$ and

\[
E' =  \bigcup_{v\in V} \left\{
\begin{aligned}
&(v_1,v_2)\text{ with capacity } 1 \text{ and cost}-\lambda(v)\\
&(v_1,v_2)\text{ with capacity } \infty \text{ and cost }0\\
&(s,v_1)\text{ with capacity } \infty \text{ and cost }0\\
&(v_2,t)\text{ with capacity } \infty \text{ and cost }0
\end{aligned}
\right\}
\cup \bigcup_{(u,v)\in E} (u_2,v_1)\text{ with capacity } \infty \text{ and cost }0.
\]
Additionlly, the flow passing through sink $t$ is capped by $k$. It is easy to see that \wkpp is equivalent to the Min-Cost Max-Flow from $s$ to $t$ on $G'$.

The following LP is the Min-Cost Max-Flow LP for \wkpp instance. $y_e$ for $e\in E'$ indicates the amount of flow passing through $e$ and $\flow(v)$ for $v\in V$ indicates the amount of flow passing through the arc $(v_1,v_2)$ with capacity $1$ and cost $-\lambda(v)$.
\begin{align}
\max \quad & \sum_{v\in V} \lambda(v)\flow(v)\tag{\wkpp LP}\\
&\sum_{e\in \delta^-(v_1)} y_e = \sum_{e\in \delta^+(v_2)} y_e & \forall v\in V\notag\\
&\flow(v) = \min(\sum_{e\in \delta^-(v_1)} y_e, 1) & \forall v\in V\notag\\
&\flow(t) = \sum_{e\in \delta^-(t)} y_e \leq k\notag\\
& 0 \leq y_e \leq 1 & \forall e\in E'\notag
\end{align}

Now we are ready to present our algorithm to solve \pkso. We partition the vertex sets into several layers $L_1,\dots,L_t$ (in some ways that will be specified later). In each layer $i$, run Algorithm~\ref{alg:filter} on $L_i$ and get the set of representatives $R_i$ and corresponding clusters $\{D(v):v\in R_i\}$. Let $V=\bigcup_{i=1}^t R_i$, we build the contact DAG in the following way.

\begin{definition}[contact DAG]
\label{def:contact-dag}
Contact DAG $G=(V,E)$ is a DAG on vertex set $V$ where each vertex $v\in V$ has a weight $\lambda(v)=|D(v)|$. For $u\in R_i$ and $v\in R_j$ where $i>j$:
\[
    (u,v)\in E \Longleftrightarrow \exists f\in F: d(f,u) \leq r_u \text{ and } d(f,v) \leq r_v.
\]
\end{definition}

By solving \wkpp on the contact DAG, we obtain a collection of $k$ paths $P$. The above steps are summarized in Algorithm~\ref{alg:layering}. Then, we will choose a center to open for each path, and the resulting union of the paths represents the vertices that we cover.

\begin{algorithm}[ht]
    \caption{Find Paths 1}
    \label{alg:layering}
	\begin{algorithmic}[1]
		\Require Metric $(X=C\cup F,d)$, radius function $r$, and LP solution $\cov$
        \For {$i=1$ to $t$}
        \State $R_i,\{D(v):v\in R_i\}\gets$ \text{Filter}$((L_i\cup F,d),r,\cov)$
        \EndFor
        \State Construct contact DAG $G=(V,E)$ according to Definition~\ref{def:contact-dag}
        \State Get a solution $P$ for \wkpp on $G$, i.e. an integral solution to the \wkpp LP
		\Ensure $P$
	\end{algorithmic}
\end{algorithm}

\begin{lemma}
\label{lem:path-cover}
The output of Algorithm~\ref{alg:layering} $P$ satisfies
\[
    \val(P) = \sum_{v\in \bigcup_{p\in P} p} \lambda(v) \geq m.
\]
\end{lemma}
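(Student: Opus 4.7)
My plan is to reduce the claim to LP integrality: since the \wkpp LP is a min-cost max-flow LP on a network with integer (or $+\infty$) capacities and integer cap $k$ on the sink, its optimum is attained integrally, and any integral optimum decomposes into at most $k$ paths in $G$ that together contribute $\val(P) = \sum_{v} \lambda(v)\,\flow(v)$. So it suffices to exhibit a fractional feasible solution to the \wkpp LP whose objective value is at least $m$, and then appeal to integrality.

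I would construct that feasible solution directly from the \pkso LP solution $(x,\cov)$. For each facility $f\in F$, let $S_f := \{v \in V : d(f,v) \leq r_v\}$. The key structural step is to observe that $|S_f \cap R_i| \leq 1$ for every layer $i$: if $u,v \in R_i$ both lay in $S_f$, then $d(u,v) \leq d(u,f) + d(f,v) \leq r_u + r_v$, contradicting Lemma~\ref{lem:filter}(b). Consequently $S_f$ contains at most one representative per layer, and Definition~\ref{def:contact-dag} guarantees that every pair in $S_f$ is joined by an arc of the contact DAG, oriented from the higher-index layer to the lower. Thus $S_f$ spans a directed path in $G$. I would then push $x_f$ units of flow from $s$ along the corresponding path in $G'$ to $t$, routing through the capacity-$1$ arc $(v_1,v_2)$ for each $v \in S_f$ first and spilling any excess onto the parallel capacity-$\infty$ arc. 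Facilities with $S_f = \emptyset$ contribute no flow.

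Verifying feasibility is routine: flow conservation holds by construction; the only bounded capacities are the capacity-$1$ arcs, each of which carries $\min(\sum_{f:\,d(f,v)\le r_v} x_f,\,1) = \cov(v) \leq 1$; and the flow into $t$ is $\sum_{f:\,S_f\neq\emptyset} x_f \leq \sum_f x_f \leq k$. By construction $\flow(v) = \cov(v)$ for each $v \in V$, so the objective value is
\begin{align*}
\sum_{v\in V}\lambda(v)\,\flow(v)
&= \sum_{i}\sum_{v \in R_i} |D(v)|\,\cov(v) \\
&\geq \sum_{i}\sum_{v\in R_i}\sum_{u\in D(v)} \cov(u) \\
&= \sum_{u\in C}\cov(u) \;\geq\; m,
\end{align*}
where the inequality uses Lemma~\ref{lem:filter}(c) and the middle equality uses the partition property Lemma~\ref{lem:filter}(a) together with $C = \biguplus_i L_i$. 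Integrality of the min-cost max-flow polytope then yields an integral solution (i.e.\ a collection of $k$ paths $P$) with $\val(P)\ge m$.

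The one step requiring care, and what I view as the crux, is the structural claim that $S_f$ forms a directed path in the contact DAG — it is exactly the filter separation property combined with the definition of contact DAG arcs that makes the proposed routing well defined, and hence enables the \pkso LP solution to be transported to a \wkpp LP solution of the same "mass." Once that is in hand, the remaining verifications are essentially the same bookkeeping used in the proof of Theorem~\ref{thm:k-outlier}.
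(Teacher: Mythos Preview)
Your proposal is correct and follows essentially the same approach as the paper: define $A_f = S_f$, observe that the filter separation gives at most one representative per layer so $A_f$ spans a path in the contact DAG, push $x_f$ units of flow along that path, verify $\flow(v)=\cov(v)$, and bound the objective by $m$ via Lemma~\ref{lem:filter}(a)(c). Your write-up is slightly more explicit about why $S_f$ induces a path and about routing excess through the parallel capacity-$\infty$ arc, but the argument is the same.
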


\begin{proof}
Note $P$ implies an integral solution to the \wkpp LP, with $\val(P)$ equals to the objective value. Similar to the proof of Theorem~\ref{thm:k-outlier}, we show that there is a feasible fractional solution to the \wkpp LP with objective value at least $m$. Since the \wkpp LP is integral, thereby we have $\val(P)=\text{optimal integral solution}\geq m$.

We first obtain a solution of \pkso LP.  For any facility $f\in F$, let $A_f=\{v\in V:d(f,v)\leq r_v\}$ be the set of vertices that $f$ can cover. For all $R_i$, $|A_f\cap R_i|\leq 1$ due to Lemma~\ref{lem:filter}(b). Let $p_f$ be the path in contact DAG connecting vertices in $A_f$ and $s,t$ in topological order. We add $x_f$ to all $y_e$ for $e\in p_f$ and increase $\flow(v)$ by $x_f$ for all $v\in A_f$ (taking minimum with 1).

First we claim that this is a feasible solution to \wkpp LP. Clearly it satisfies flow conservation $\sum_{e\in \delta^-(v_1)} y_e = \sum_{e\in \delta^+(v_2)}y_e$ since each time we add values to a path from $s$ to $t$. Also $\flow(t)\leq k$ since $\flow(t)=\sum_{f\in F}x_f \leq k$.

Next we argue that the objective value is at least $m$. Note that
\[
    \flow(v) = \min(\sum_{\substack{f\in F:\\v\in A_f}}x_f, 1) = \min(\sum_{\substack{f\in F:\\d(f,v)\leq r_v}}x_f, 1) = \cov(v).
\]
and thus we have
\begin{align*}
    \sum_{v\in V} \lambda(v)\flow(v)& =\sum_{v\in V} \lambda(v)\cov(v)\\
    &=\sum_{v\in V}|D(v)|\cov(v)\\
    &\geq \sum_{v\in V} \sum_{u\in D(v)} \cov(u)\\
    &=\sum_{v\in C} \cov(v)\geq m.
\end{align*}

Since the \wkpp LP is integral, the optimal integral solution found by Algorithm~\ref{alg:layering} has an objective value at least $m$.
\end{proof}

\begin{theorem}
\label{thm:same-radii}
There is a $\frac{3b^2-1}{b^2-1}$-approximation algorithm if the radii are power of $b$.
\end{theorem}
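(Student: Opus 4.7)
The plan is to apply Algorithm~\ref{alg:layering} with one layer per distinct radius, $L_i = \{v \in C : r_v = b^{i-1}\}$, after a small modification (described below) of the contact DAG construction. Lemma~\ref{lem:path-cover} then yields a collection $P$ of $k$ paths whose total weight $\sum_{v \in \bigcup_{p \in P} p} \lambda(v)$ is at least $m$. For each path $p = v_1 \to v_2 \to \cdots \to v_\ell$ in decreasing radius order, I would open one facility $f_p \in F$ with $d(f_p, v_\ell) \leq r_{v_\ell}$; such an $f_p$ exists by the standing assumption from the preliminaries. Since the paths cover vertices of total cluster weight at least $m$, showing that every covered client lies within the claimed factor of its radius will complete the argument.

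The bound on the approximation ratio follows from a telescoping argument along each path. For any vertex $u \in D(v_j)$ covered by the path, the triangle inequality combined with Lemma~\ref{lem:filter}(c) (giving $d(u, v_j) \leq 2 r_{v_j}$) and the contact DAG edge condition $d(v_k, v_{k+1}) \leq r_{v_k} + r_{v_{k+1}}$ yields
\[
d(u, f_p) \leq 2 r_{v_j} + \sum_{k=j}^{\ell-1}(r_{v_k} + r_{v_{k+1}}) + r_{v_\ell} = 3 r_{v_j} + 2 \sum_{k=j+1}^{\ell} r_{v_k}.
\]
The improvement over the $(3b-1)/(b-1)$-bound of~\cite{bajpai2022revisiting} hinges on ensuring that any two consecutive vertices $v_k, v_{k+1}$ on a selected path lie in layers whose indices differ by at least two. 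Under this property, $r_{v_{j+k}} \leq r_{v_j}/b^{2k}$, and
\[
\sum_{k=j+1}^{\ell} \frac{r_{v_k}}{r_{v_j}} \leq \sum_{m=1}^{\infty} \frac{1}{b^{2m}} = \frac{1}{b^2-1},
\]
so that, using $r_u = r_{v_j}$, we obtain $d(u, f_p)/r_u \leq 3 + 2/(b^2-1) = (3b^2-1)/(b^2-1)$, as desired.

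The main technical obstacle is realizing this two-layer skipping property while preserving the LP-flow argument underlying Lemma~\ref{lem:path-cover}: the modified DAG must still admit a fractional \wkpp-solution of value at least $m$. The difficulty is that for a facility $f$ whose set of covered representatives $A_f$ populates a run of consecutive layers, routing flow through $A_f$ in the layer-skipping DAG is no longer a single path. I expect to address this by splitting $A_f$ into its odd- and even-indexed subsequences --- each of which automatically respects the $\geq 2$-layer gap --- routing parity-restricted flows through the restricted DAG, and then either solving two parallel sub-problems under a balanced split of the facility budget or replicating the flow and charging the resulting overhead carefully in the coverage accounting.
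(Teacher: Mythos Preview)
Your distance calculation is correct once the two-layer skipping property holds, but the mechanism you propose for enforcing it does not go through. Splitting each facility's covered representatives $A_f$ into its odd-layer and even-layer subsequences produces two paths, not one, and routing $x_f$ units along each of them sends total flow $2\sum_{f} x_f \le 2k$ into the sink. The integrality of the \wkpp LP then only guarantees $2k$ paths with total value $\ge m$; with a budget of $k$ facilities you cannot realise both parities. Choosing one parity covers only the corresponding half of the coverage mass (there is no reason $m_{\mathrm{odd}}\ge m$ or $m_{\mathrm{even}}\ge m$), and a ``balanced split of the facility budget'' fails for the same reason: every facility simultaneously contributes its full $x_f$ to both parity flows, so the budget constraint does not decompose as $k_1+k_2=k$. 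Restricting the DAG edges to skip at least two layers directly, without the parity trick, is even worse: a facility covering representatives in consecutive layers then has no single path through $A_f$ at all, and Lemma~\ref{lem:path-cover} breaks.

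The paper resolves this not by changing the DAG edges but by \emph{relabelling the layers}: it sets $L_0=B_{t-1},\,L_1=B_{t-3},\ldots,L_{t/2-1}=B_1,\,L_{t/2}=B_0,\,L_{t/2+1}=B_2,\ldots,L_{t-1}=B_{t-2}$. The contact DAG and the proof of Lemma~\ref{lem:path-cover} are then used verbatim, since each $A_f$ still meets every $R_i$ in at most one vertex and hence still forms a single DAG path. The gain comes from opening, for each selected path, the facility $f_{t/2}$ witnessing the edge between the two \emph{middle} layers $L_{t/2-1}$ and $L_{t/2}$; walking from any $u_i$ toward this middle, consecutive $u_j$'s lie in $B$-classes whose indices differ by $2$, so their radii shrink by a factor $b^2$ at each step, yielding exactly your geometric sum $3+2/(b^2-1)$. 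In short, the improvement is obtained by reordering the layers and opening in the middle, not by pruning DAG edges --- the latter is precisely what breaks the LP argument.
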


\begin{proof}

Suppose radii are from $b^0$ to $b^{t-1}$ where $t$ is even, let $B_i=\{v\in X:r_v = b^{i-1}\}$. We partition the vertex set $C$ such that $L_0=B_{t-1},L_1=B_{t-3},\dots,L_{\frac{t}{2}-1}=B_{1},L_{\frac{t}{2}}=B_{0},L_{\frac{t}{2}+1}=B_{2},\dots,L_{t-1}=B_{t-2}$ and run Algorithm~\ref{alg:layering}. According to Lemma~\ref{lem:path-cover}, we have $\sum_{v\in \bigcup_{p\in P}} |D(v)|\geq m$. Therefore, we only need to show that $\bigcup_{v\in \bigcup_{p\in P}} D(v)$ can be covered within $\frac{3b^2-1}{b^2-1}$ times the radius.

For each path $p\in P$ from the output of Algorithm~\ref{alg:layering}, consider the worst case that it has a vertex $u_i$ from every $L_i$. It will be clear right off why this is the worst case. Let $f_i$ be an arbitrary facility covers both $u_{i-1}$ and $u_{i}$, i.e. $d(f_i,u_{i-1})\leq r_{u_{i-1}}$ and $d(f_i,u_i)\leq r_{u_i}$. It always exists by the construction of contact DAG. We open the facility $f_{\frac{t}{2}}$. Fix some $i\leq \frac{t}{2}-1$. For all $v\in D_{u_i}$, we have
\begin{align*}
\frac{d(v,f_{\frac{t}{2}})}{r_v}&\leq \frac{1}{r_v}\big(d(v,u_i)+d(u_i,f_{i+1})+d(f_{i+1},f_{i+2})+\dots+d(f_{\frac{t}{2}-1},f_{\frac{t}{2}})\big)\\
&\leq \frac{1}{r_v}\big((r_v + r_{u_i}) + r_{u_i} + 2\cdot r_{u_{i+1}} + \dots + 2\cdot r_{u_{\frac{t}{2}-1}}\big)\\
& = 3 + 2\cdot (b^{-2} + b^{-4} + \dots + b^{2i+2-t})\\
& = 3 + 2\cdot \frac{b^{-2}-b^{2i-t}}{1-b^{-2}}\\
&\leq 3 + 2\cdot \frac{b^{-2}}{1-b^{-2}} = \frac{3b^2-1}{b^2-1}.
\end{align*}
For $i\geq \frac{t}{2}$ and odd $t$, the proof is analogous.

Note that if $u_i$ is missing from some layers $L_i$, we can always open the facility in the middle. Formally, for largest $i<\frac{t}{2}$ such that $u_i$ exists and smallest $j\geq \frac{t}{2}$ such that $u_j$ exists, open a facility $f$ covers both $u_i$ and $u_j$, i.e. $d(f,u_i)\leq r_{u_i}$ and $d(f,u_j)\leq r_{u_j}$. For any $i$ and $v\in D_{u_i}$, the worst case distance between $v$ and $f$ can only be smaller when some $u_i$'s are missing since we can take shortcuts by skipping the missing layers. This completes the proof.
\end{proof}

\begin{theorem}
There is a $(1+3\sqrt{3})$-approximation algorithm for \pkso.
\end{theorem}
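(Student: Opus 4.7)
\medskip

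\noindent\textbf{Proof proposal.}
The plan is to reduce an arbitrary instance of \pkso to one whose radii behave like powers of a base $b>1$, apply the layering-and-paths framework of Theorem~\ref{thm:same-radii}, and then optimize $b$. Crucially, the reduction will only affect how clients are grouped into layers; the LP, the Filter step, and the contact DAG will all continue to use the original radii $r_v$.

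Concretely, assume as before that $\alpha^*=1$. For each $v \in C$, define the rounded radius $r'_v = b^{\lfloor \log_b r_v \rfloor}$, so $r'_v \le r_v < b\, r'_v$. Partition $C$ into $B_j = \{v : r'_v = b^j\}$ and interleave these into layers $L_0,L_1,\dots,L_{t-1}$ exactly as in Theorem~\ref{thm:same-radii} (outside-in, then middle-out, so that consecutive $L_i$'s correspond to $B_j$'s whose powers differ by $2$). Run Algorithm~\ref{alg:layering} on these layers with the \emph{true} radii $r_v$: the LP, Filter outputs $R_i$ and $D(v)$, and the contact DAG use $r_v$ throughout. Lemma~\ref{lem:path-cover} then still applies, giving a path collection $P$ with $\sum_{v\in \bigcup_{p\in P} p} |D(v)| \ge m$. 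For each path, open the middle facility $f_{t/2}$ (with the obvious modification when some $u_i$ is missing, as in the proof of Theorem~\ref{thm:same-radii}).

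The heart of the argument is the new per-vertex bound. Fix a path with vertices $u_0,u_1,\dots$ and consider $v \in D(u_i)$ with $i<t/2$. Telescoping along the path exactly as in Theorem~\ref{thm:same-radii} gives
\begin{align*}
d(v,f_{t/2}) \;\le\; (r_v+r_{u_i}) + r_{u_i} + 2r_{u_{i+1}} + 2r_{u_{i+2}} + \cdots + 2r_{u_{t/2-1}}.
\end{align*}
Because $v$ and $u_i$ lie in the same layer $B_{j_v}$, their true radii both lie in $[b^{j_v},b^{j_v+1})$, hence $r_{u_i} \le b\, r_v$. Because $u_{i+k} \in B_{j_v-2k}$, its true radius is at most $b^{j_v-2k+1}$, so $r_{u_{i+k}}/r_v \le b^{-(2k-1)}$. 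Summing the resulting geometric series yields
\begin{align*}
\frac{d(v,f_{t/2})}{r_v} \;\le\; 1 + 2b + 2\sum_{k=1}^{\infty} b^{-(2k-1)} \;=\; 1 + 2b + \frac{2b}{b^2-1}.
\end{align*}
The symmetric case $i \ge t/2$ and the ``missing layer'' case are handled as in Theorem~\ref{thm:same-radii} and only improve the bound.

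Finally, optimizing over $b$: differentiating $1+2b+2b/(b^2-1)$ and setting the derivative to zero gives $(b^2-1)^2 = b^2+1$, i.e.\ $b^2(b^2-3)=0$, so $b=\sqrt{3}$ is optimal. Substituting yields the ratio $1 + 2\sqrt{3} + \sqrt{3} = 1 + 3\sqrt{3}$. The main subtlety I expect is the bookkeeping of original versus rounded radii: one must verify that using original radii in Filter and in the contact DAG is compatible with the layering derived from rounded radii, and that within each layer the ratio $r_{u_i}/r_v$ is genuinely bounded by $b$ (which is what drives the improvement over naively rounding and losing a full factor $b$ multiplicatively against $\frac{3b^2-1}{b^2-1}$).
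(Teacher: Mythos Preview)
Your proposal is correct and follows essentially the same approach as the paper's own proof: bucket clients into layers $B_j$ by rounding radii to powers of $b$ while using the true radii in Filter and the contact DAG, interleave the layers exactly as in Theorem~\ref{thm:same-radii}, bound the ratio via the telescoping sum using $r_{u_i}/r_v<b$ within a layer and $r_{u_{i+k}}/r_v<b^{-(2k-1)}$ across layers, obtain $1+2b+2b/(b^2-1)$, and optimize at $b=\sqrt{3}$. The paper presents the same computation (written as $(2b^3+b^2-1)/(b^2-1)$) with the same layer ordering and the same choice of the middle facility.
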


\begin{proof}

For some $b$ that will be chosen later, suppose $t=\lceil \log_{b}r_{max}\rceil$, let $B_i=\{v\in X:b^{i-1}\leq r_v < b^i\}$. We partition the vertex set $C$ such that $L_0=B_{t-1},L_1=B_{t-3},\dots,L_{\frac{t}{2}-1}=B_{1},L_{\frac{t}{2}}=B_{0},L_{\frac{t}{2}+1}=B_{2},\dots,L_{t-1}=B_{t-2}$ and run Algorithm~\ref{alg:layering}. Similar to the proof of Theorem~\ref{thm:same-radii}, we only need to show that $\bigcup_{v\in \bigcup_{p\in P}} D(v)$ can be covered within $1+3\sqrt{3}$ times the radius.

Following the notation in the proof of Theorem~\ref{thm:same-radii}, for each path $p\in P$, we open $f_{\frac{t}{2}}$ and for all $v\in D_{u_i},i\leq \frac{t}{2}-1$, we have
\begin{align*}
\frac{d(v,f_{\frac{t}{2}})}{r_v}&\leq \frac{1}{r_v}\big(d(v,u_i)+d(u_i,f_{i+1})+d(f_{i+1},f_{i+2})+\dots+d(f_{\frac{t}{2}-1},f_{\frac{t}{2}})\big)\\
&\leq \frac{1}{r_v}\big((r_v + r_{u_i}) + r_{u_i} + 2\cdot r_{u_{i+1}} + \dots + 2\cdot r_{u_{\frac{t}{2}-1}}\big)\\
& \leq 1 + 2\cdot b + 2\cdot (b^{-1} + b^{-3} + \dots + b^{2i+3-t})\\
& = 1 + 2\cdot b + 2\cdot \frac{b^{-1}-b^{2i+1-t}}{1-b^{-2}}\\
&\leq 1 + 2\cdot b + 2\cdot \frac{b^{-1}}{1-b^{-2}} = \frac{2b^3+b^2-1}{b^2-1}.
\end{align*}
which attains its minimum $1+3\sqrt{3}$ when $b=\sqrt{3}$.
\end{proof}

We remark that the main procedures of our algorithm are the same as those in the algorithm of \cite{bajpai2022revisiting} (see their Section 3). The main difference is in the way we order the layers. In \cite{bajpai2022revisiting}, they set $L_0=B_0,L_1=B_1,\dots,L_{t-1}=B_{t-1}$ and open a facility on one side. In our case, we open a facility in the middle and place the layers alternately on two sides. This has enabled us to achieve an improved approximation.

When the number of distinct radii is small, we provide a technique that further improves the approximation ratio: we can consider the relationship of the radii of different layers and decide whether contract them or not. We start with the known result for $2$ radii.

\begin{theorem}
\label{thm:2-radii}
There is a $3$-approximation algorithm when there are two different radii.
\end{theorem}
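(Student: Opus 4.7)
The plan is to instantiate Algorithm~\ref{alg:layering} with just two layers, one per distinct radius class. Let the two radii be $r_1$ and $r_2$, and set $L_1 = \{v \in C : r_v = r_1\}$ and $L_2 = \{v \in C : r_v = r_2\}$. After running Filter on each layer to obtain representatives $R_1, R_2$ and their clusters $\{D(v)\}$, the contact DAG from Definition~\ref{def:contact-dag} has edges only between $R_2$ and $R_1$, and an edge $(u_2, u_1)$ exists precisely when some facility covers both $u_1$ and $u_2$ within their respective radii. Lemma~\ref{lem:path-cover} then guarantees that the optimal \wkpp solution $P$ covers total weight $\sum_{v \in \bigcup_{p \in P} p} |D(v)| \geq m$, so it suffices to open, for each path $p$, a single facility that covers every client in $\bigcup_{v \in p} D(v)$ within $3 r_v$.

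The facility selection is immediate since every path contains at most one representative per layer. If $p$ contains both $u_1 \in R_1$ and $u_2 \in R_2$ joined by an edge, then by the definition of the contact DAG there exists a facility $f$ with $d(f, u_i) \leq r_{u_i}$ for both $i$, and we open such an $f$. If $p$ contains only a single $u_i$, we open any facility within distance $r_{u_i}$ of $u_i$, which exists by the standing assumption $\alpha^* = 1$. The $3$-approximation bound then follows from Lemma~\ref{lem:filter}(c) and the triangle inequality: for any $v \in D(u_i)$,
\[
d(v, f) \leq d(v, u_i) + d(u_i, f) \leq (r_v + r_{u_i}) + r_{u_i} = 3 r_v,
\]
where the final equality uses $r_v = r_{u_i}$ because $v$ and $u_i$ lie in the same radius class.

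There is no real obstacle here, and in particular no need for the middle-opening trick used in Theorem~\ref{thm:same-radii}. With only two layers the chosen facility always lies within distance $r_{u_i}$ of the representative of every represented layer on the path, so the ratio $d(v, f)/r_v$ never mixes $r_1$ and $r_2$; the cross-layer distance accumulation responsible for the larger constants in Theorem~\ref{thm:same-radii} and in the general $(1+3\sqrt{3})$-bound simply does not arise.
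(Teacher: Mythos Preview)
Your proof is correct and follows essentially the same approach as the paper: partition into two layers by radius, run Algorithm~\ref{alg:layering}, and for each path open the facility witnessing the edge (or any covering facility for a singleton), then bound $d(v,f)\le (r_v+r_{u_i})+r_{u_i}=3r_v$ using $r_v=r_{u_i}$. You are slightly more explicit than the paper about the singleton-path case and about why $r_v=r_{u_i}$, but the argument is otherwise identical.
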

\begin{proof}
Denote by $r_0,r_1$ the 2 radii. We partition the vertex set to $L_0=\{v\in C:r_v=r_0\},L_1=\{v\in C:r_v=r_1\}$ and run Algorithm~\ref{alg:layering}. We only need to show that $\bigcup_{v\in \bigcup_{p\in P}} D(v)$ can be covered within $3$ times the radius.

Consider a path $\{u_0,u_1\}\in P$ with $v_0\in D(u_0),v_1\in D(u_1)$, which is illustrated in Figure~\ref{fig:2-radii}. If we open $f_1$, it is easy to see that $d(v_0,f_1)\leq 3r_0$ and $d(v_1,f_1)\leq 3r_1$.
\end{proof}

\begin{figure}[htb]
\centering
\begin{tikzpicture}[every node/.style = {draw, circle, inner sep = 1pt}]
    \node (u0) at (0,3) {$u_0$};
    \node (u1) at (0,6) {$u_1$};
    \node (f1) at (0,4.5) {$f_1$};
    \node (v0) at (2,3) {$v_0$};;
    \node (v1) at (2,6) {$v_1$};
    \draw (u0) to node[draw = none, left] {$\leq r_0$} (f1);
    \draw (u1) to node[draw = none, left] {$\leq r_1$} (f1);
    \draw (u1) to node[draw = none, yshift=10] {$\leq 2r_1$} (v1);
    \draw (u0) to node[draw = none, yshift=10] {$\leq 2r_0$} (v0);
\end{tikzpicture}
\caption{Illustration of 2 radii}
\label{fig:2-radii}
\end{figure}
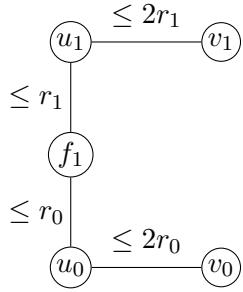

\begin{figure}[hbt]
\centering
\begin{tikzpicture}[every node/.style = {draw, circle, inner sep = 1pt}]
    \node (u0) at (0,3) {$u_0$};
    \node (u1) at (0,0) {$u_1$};
    \node (u2) at (0,6) {$u_2$};
    \node (f1) at (0,1.5) {$f_1$};
    \node (f2) at (0,4.5) {$f_2$};
    \node (v0) at (2,3) {$v_0$};
    \node (v1) at (2,0) {$v_1$};
    \node (v2) at (2,6) {$v_2$};
    \draw (u0) to node[draw = none, left] {$\leq r_0$} (f1);
    \draw (u0) to node[draw = none, left] {$\leq r_0$} (f2);
    \draw (u2) to node[draw = none, left] {$\leq r_2$} (f2);
    \draw (u1) to node[draw = none, left] {$\leq r_1$} (f1);
    \draw (u2) to node[draw = none, yshift=10] {$\leq 2r_2$} (v2);
    \draw (u0) to node[draw = none, yshift=10] {$\leq 2r_0$} (v0);
    \draw (u1) to node[draw = none, yshift=10] {$\leq 2r_1$} (v1);
\end{tikzpicture}
\begin{tikzpicture}[every node/.style = {draw, circle, inner sep = 1pt}]
    \node (u0) at (0,3) {$u_0$};
    \node (u2) at (0,6) {$u_2$};
    \node (f1) at (0,4.5) {$f_1$};
    \node (v0) at (2,3) {$v_0$};;
    \node (v1) at (2,6) {$v_1$};
    \draw (u0) to node[draw = none, left] {$\leq r_0$} (f1);
    \draw (u2) to node[draw = none, left] {$\leq r_2$} (f1);
    \draw (u2) to node[draw = none, yshift=10] {$\leq r_2+r_1$} (v1);
    \draw (u0) to node[draw = none, yshift=10] {$\leq 2r_0$} (v0);
\end{tikzpicture}
\begin{tikzpicture}[every node/.style = {draw, circle, inner sep = 1pt}]
    \node (u1) at (0,3) {$u_1$};
    \node (u2) at (0,6) {$u_2$};
    \node (f1) at (0,4.5) {$f_1$};
    \node (v0) at (2,3) {$v_0$};;
    \node (v2) at (2,6) {$v_2$};
    \draw (u0) to node[draw = none, left] {$\leq r_1$} (f1);
    \draw (u2) to node[draw = none, left] {$\leq r_2$} (f1);
    \draw (u2) to node[draw = none, yshift=10] {$\leq 2r_2$} (v2);
    \draw (u1) to node[draw = none, yshift=10] {$\leq r_1+r_0$} (v0);
\end{tikzpicture}
\caption{Illustration for the case of $3$ distinct radii}
\label{fig:3-radii}
\end{figure}
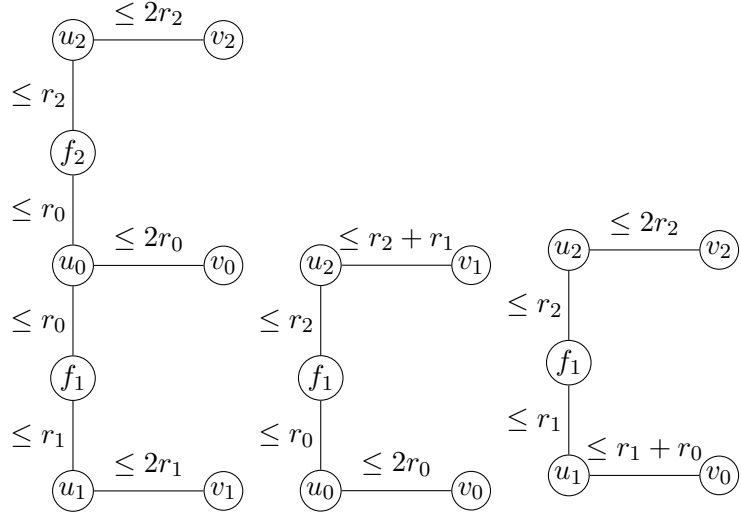

\begin{theorem}
There is a $3.94$-approximation algorithm when there are three different radii.
\end{theorem}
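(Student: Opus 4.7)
The plan is to run Algorithm~\ref{alg:layering} three times, each with a different layer partition suggested by the three subfigures of Figure~\ref{fig:3-radii}, and return the best of the three resulting feasible solutions. The three partitions I would use are: (A) three separate layers $L_0=B_2,\ L_1=B_0,\ L_2=B_1$, placing the smallest-radius class $B_0$ in the middle; (B) two layers $L_0=B_0,\ L_1=B_1\cup B_2$, merging the two larger radius classes; and (C) two layers $L_0=B_0\cup B_1,\ L_1=B_2$, merging the two smaller radius classes. For each run, Lemma~\ref{lem:path-cover} guarantees that the resulting integral \wkpp solution covers at least $m$ clients using at most $k$ facilities, since its proof does not depend on how the layers are formed.

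For configuration~A the analysis mirrors that of Theorem~\ref{thm:same-radii}: for each path in the contact DAG I would open the facility adjacent to the middle vertex on the $B_1$ side, so that the only ``long-reach'' case is a cluster vertex in $D(\bar u_0)\cap B_2$ at the far end of a three-vertex path; this gives $d(v,f)\le 3r_2+2r_0$ and hence $\rho_A\le 3+2r_0/r_2$, with all other clusters yielding ratio at most $3$.

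For configuration~B every path in the contact DAG has at most two vertices, one in $R_0\subseteq B_0$ and one in $R_1\subseteq B_1\cup B_2$; I open the facility witnessing the edge (or a facility within $r_{\bar u}$ of the unique vertex if the path has length one). The subtle case is when Filter selects a representative $\bar u_1\in B_2$ and some $v\in D(\bar u_1)\cap B_1$ is absorbed into its cluster: Lemma~\ref{lem:filter}(c) only yields $d(v,\bar u_1)\le r_1+r_2$, so $d(v,f)\le r_1+2r_2$, a ratio of $1+2r_2/r_1$ relative to $r_v=r_1$. All other sub-cases (same-radius pairs within $L_1$, a representative $\bar u_1\in B_1$ absorbing a $B_2$ member, or any $v\in D(\bar u_0)\subseteq B_0$) give ratio at most $3$, so $\rho_B\le 1+2r_2/r_1$. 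Configuration~C is symmetric with $\rho_C\le 1+2r_1/r_0$.

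Setting $\alpha=r_1/r_0$ and $\beta=r_2/r_1$, both $\ge 1$, the three bounds become
\[
\rho_A \le 3+\frac{2}{\alpha\beta},\qquad \rho_B \le 1+2\beta,\qquad \rho_C \le 1+2\alpha,
\]
and the algorithm achieves $\min(\rho_A,\rho_B,\rho_C)$. Over $\alpha,\beta\ge 1$ the worst case of this minimum is attained when all three coincide; by symmetry in $\alpha,\beta$ this forces $\alpha=\beta$, and equating $\rho_A=\rho_C$ then reduces to $\alpha^3-\alpha^2-1=0$, whose real root is $\alpha^*\approx 1.4656$. The corresponding worst-case ratio is $1+2\alpha^*\approx 3.931<3.94$.

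The hard part will be partitions~B and C: the standard $\cov$-based Filter ordering can easily pick a representative from the \emph{larger} of the two merged radius classes, which is exactly the source of the $1+2r_2/r_1$ and $1+2r_1/r_0$ worst-case ratios, each of which is $>3$ in general. The resolution is not to fix any single partition but to argue via the three-way min--max above that whatever $(\alpha,\beta)$ is, at least one of the three layerings beats $3.94$; the cubic $\alpha^3-\alpha^2-1=0$ identifies the tight bottleneck.
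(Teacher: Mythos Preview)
Your proposal is correct and follows essentially the same approach as the paper: try three layerings (the three-layer one with $B_0$ in the middle, and the two two-layer mergings), bound each by $3+2/(\alpha\beta)$, $1+2\beta$, $1+2\alpha$ respectively, and take the minimum, whose worst case over $\alpha,\beta\ge 1$ is $\approx 3.931$ at the real root of $\alpha^3-\alpha^2-1=0$. Your description of configuration~A (placing $B_0$ in the middle layer) in fact matches the paper's Figure~\ref{fig:3-radii} and the derived bound $3+2/(\alpha\beta)$ more precisely than the paper's own text, which lists the layering as $L_0=B_0,L_1=B_1,L_2=B_2$; the latter would only give $3+2/\beta$, so the figure (and your write-up) reflect the intended argument.
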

\begin{proof}
Denote by $r_0,r_1,r_2$ the 3 radii and $B_i=\{v\in C:r_v=r_i\}$. As shown in Figure~\ref{fig:3-radii}, consider three ways to partition the vertex set: (a) $L_0=B_0,L_1=B_1,L_2=B_2$; (b) $L_0=B_0,L_1=B_1\cup B_2$; (c) $L_0=B_0\cup B_1,L_1=B_2$. Let $\alpha=\frac{r_1}{r_0},\beta=\frac{r_2}{r_1}$. If we open $f_1$, it is easy to see that the worst case approximation ratio are $\frac{3r_2+2r_0}{r_2}=3+\frac{2}{\alpha\beta},\frac{r_1+2r_2}{r_1}=1+2\alpha,\frac{r_0+2r_1}{r_0}=1+2\beta$ respectively. Given $\alpha,\beta>1$, $\min(3+\frac{2}{\alpha\beta},1+2\alpha,1+2\beta)$ attains its maximum $\approx 3.9311$ at $\alpha = \beta \approx 1.47$.
\end{proof}

We remark that the technique is useful for more than $3$ radii, but we omit this part due to the tedious case analysis.

\section{Priority Knapsack Supplier with Outliers}
\label{sec:pknapso}

In this section, we describe a modification to the algorithm in Section~\ref{sec:pkso} following~\cite{bajpai2022revisiting}, by trading off a slight approximation ratio loss to force the contact DAG to be a forest. In~\cite{bajpai2022revisiting}, the authors claimed a $14$-approximation for \pknapso. However, there is a minor error in Claim 19 of that paper. We reproduce their result in this section to fix the error, and since we also need to use the DAG-to-forest technique in our algorithm for \pcks.

\begin{definition}[Priotrity Knapsack Supplier with Outliers (\pknapso)]
The input is a metric space $(X=C\cup F,d)$, a radius function $r: C\to \mathbb{R}_{>0}$, a weight function $w: F\to \mathbb{R}_{\geq0}$, and parameters $k,m\in \mathbb{N}$. The goal is to find $S\subseteq F$ with $w(S):=\sum_{f\in S}w(f)\leq k$ to minimize $\alpha$ such that for at least $m$ vertices $v\in C, d(v,S)\leq \alpha \cdot r_v$.
\end{definition}

Since the natural LP relaxation of \wknappp has an unbounded integrality gap~\cite{chen2016matroid}, we are going to use an exponential size configuration LP and solve it implicitly using the ellipsoid method. Let $\mathscr{F}:= \{S\subseteq F: w(S)\leq k\}$. Consider the following convex hull of the integral solutions for \pknapso. $z_S$ indicates how much a set of facilities $S\in \mathscr{F}$ is opened as the centers.

\begin{align*}
\pcov = \{ \quad & (\cov(v): v\in C) :\\
& \cov(v)=\min(\sum_{\substack{S\in \mathscr{F}:\\d(v,S)\leq r_v}}z_S,1) & & \forall v\in C,\\
& \sum_{S\in \mathscr{F}} z_S \leq 1,\\
& 0\leq z_S \leq 1 & &\forall S\in \mathscr{F}\quad \}.
\end{align*}

Note that the dimension of $\pcov$ is polynomial although there are exponentially many auxiliary variables $z_S$'s.

\begin{definition}[Weighted Knapsack Path Packing (\wknappp)]
The input is a directed \emph{out-forest} $G=(V,E)$, a value function $\lambda: V\to \{1,2,\dots,n\}$ for some integer $n$, a weight function $w': V\to \mathbb{R}_{\geq0}$ and a parameter $k$. The goal is to find a set of paths $P\subseteq \mathcal{P}(G)$ such that $\sum_{p\in P} w'(\text{sink}(p))\leq k$ that maximizes:
\[
    \val(P)=\sum_{v\in \bigcup_{p\in P} p} \lambda(v).
\]
For $v\in V$, its weight $w'(v)$ is defined as
\[
    w'(v)=\min_{\substack{f\in F:\\d(f,\text{sink}(p))\leq r_{\text{sink}(p)}}} w(f).
\]
\end{definition}

Note that we need a forest instead of a DAG here since solving the \wknappp problem is hard on DAGs. 
When it is a forest one can use a dynamic programming algorithm to solve the problem in polynomial time (c.f. Appendix A of~\cite{bajpai2022revisiting}).

\subsection{DAG to Forest}

To ensure that the contact DAG is a directed out-forest, we use Algorithm~\ref{alg:modified-filter} as a subroutine instead of Algorithm~\ref{alg:filter}. 

\label{sec:dag-to-forest}

\begin{algorithm}[ht]
    \caption{Modified Filter}
    \label{alg:modified-filter}
	\begin{algorithmic}[1]
		\Require Metric $(X=C\cup F,d)$, radius function $r$, LP solution $\cov$, and a parameter $\ell$
		\State $U \leftarrow C$
		\State $R \leftarrow \emptyset$
		\While{$U \neq \emptyset$}
	        \State $v \leftarrow \arg\max_{u\in U} \cov(u)$
	        \State $R \leftarrow R \cup \{v\}$
	        \State $D(v) \leftarrow \{u \in U: d(u,v) \leq r_u + r_v + \ell\}$
	        \State $U \leftarrow U \setminus D(v)$
		\EndWhile
		\Ensure $R$, $\{D(v):v \in R\}$
	\end{algorithmic}
\end{algorithm}

Suppose $t=\lceil \log_{4}r_{max}\rceil$, we partition the vertex set $C$ into $t$ layers such that $L_i=\{v\in C:4^{i-1}\leq r_v < 4^i\}$. In each layer $i$, run Algorithm~\ref{alg:modified-filter} with additional parameter $4^i$ on $L_i$ and get the set of representatives $R_i$ and corresponding clusters $\{D(v):v\in R_i\}$. Let $V=\bigcup_{i=1}^t R_i$, build the contact forest in the following way:

\begin{definition}[contact forest]
\label{def:contact-forest}
Contact DAG $G=(V,E)$ is a DAG on vertex set $V$ where each vertex $v\in V$ has a value $\lambda(v)=|D(v)|$ and a weight $w(v)=\min_{f\in F:d(f,v)\leq r_v}w(f)$. For $u\in R_i$ and $v\in R_j$ where $i>j$:
\[
    (u,v)\in E \Longleftrightarrow d(u,v)\leq r_u+r_v+4^j.
\]
The contact forest is derived by removing all forward edges from $E$, i.e. remove $(u,v)\in E$ if there exists a path from $u$ to $v$ of length greater than 2 in $G$.
\end{definition}

\begin{lemma}
Contact forest is a directed out forest.
\end{lemma}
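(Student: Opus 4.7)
The plan is to show that in the contact forest every vertex has in-degree at most one; together with the fact that the construction orients each edge from a higher-indexed layer $R_i$ to a lower-indexed layer $R_j$ (which already makes the underlying graph a DAG), this yields a directed out-forest. So the whole argument reduces to ruling out two distinct surviving parents of any single vertex.

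Fix $v \in R_j$ and suppose for contradiction that $u_1 \in R_{i_1}$ and $u_2 \in R_{i_2}$ are two distinct in-neighbors of $v$ in the contact forest, with $i_1 \geq i_2 > j$. Definition~\ref{def:contact-forest} gives
\[
d(u_1,v) \leq r_{u_1} + r_v + 4^j, \qquad d(u_2,v) \leq r_{u_2} + r_v + 4^j,
\]
and since $v \in L_j$ we have $r_v < 4^j$. Applying the triangle inequality to these two bounds yields
\[
d(u_1,u_2) \leq r_{u_1} + r_{u_2} + 2r_v + 2\cdot 4^j \;<\; r_{u_1} + r_{u_2} + 4^{j+1}.
\]

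Now I would split into two cases driven by whether $u_1$ and $u_2$ lie in the same layer. If $i_1 = i_2$, then both vertices are representatives produced by the Modified Filter run with parameter $\ell = 4^{i_1}$ on $L_{i_1}$; whichever was chosen second must have lain outside the cluster of the first, so $d(u_1,u_2) > r_{u_1} + r_{u_2} + 4^{i_1} \geq r_{u_1} + r_{u_2} + 4^{j+1}$, contradicting the displayed bound. If instead $i_1 > i_2$, then $i_2 \geq j+1$ and hence $4^{i_2} \geq 4^{j+1}$, so the bound above sharpens to $d(u_1,u_2) < r_{u_1} + r_{u_2} + 4^{i_2}$, which by Definition~\ref{def:contact-forest} puts $(u_1,u_2)$ into the edge set of the pre-pruned contact DAG. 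Then $u_1 \to u_2 \to v$ is a longer path from $u_1$ to $v$, so $(u_1,v)$ qualifies as a forward edge and is removed when forming the forest, contradicting that $u_1$ is a parent of $v$ in the forest.

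The main thing to get right is the quantitative inequality $2r_v + 2\cdot 4^j < 4^{j+1}$: this is exactly why the layering is done in base $4$ (smaller bases would fail) and why the Modified Filter inflates each cluster by the additive slack $\ell = 4^i$ rather than using the vanilla filter. I would be careful to invoke the strict inequality $r_v < 4^j$ from the half-open definition of $L_j$ so that the same-layer case produces a genuine contradiction against the strict separation $d(u_1,u_2) > r_{u_1}+r_{u_2}+4^{i_1}$ guaranteed by the filter.
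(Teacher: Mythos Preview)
Your argument is correct and is essentially the same proof as the paper's: you show any vertex can have at most one surviving parent by bounding $d(u_1,u_2)$ via the common child, then splitting into the same-layer case (contradicting the Modified Filter separation) and the different-layer case (forcing $(u_1,u_2)\in E$ so $(u_1,v)$ is a forward edge). The only differences are notational (the paper uses $u,v,w$ with $w$ the child) and that you are slightly more careful with the strict inequality $r_v<4^j$ and with explaining why base $4$ and the extra slack $\ell$ are needed.
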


\begin{proof}
Suppose to the contrary, there are $u\in R_i,v\in R_j,w\in R_k$ where $i\geq j>k$ and $(u,w)\in E,(v,w)\in E$, we have
\begin{align*}
    d(u,v)& \leq d(u,w)+d(w,v)\\
    & = r_u+r_w+4^k+r_v+r_w+4^k\\
    & \leq r_u+r_v+2r_w+2\cdot 4^{k}\\
    & \leq r_u+r_v+4^{k+1}\\
    & \leq r_u+r_v+4^j.
\end{align*}

If $i=j$, this implies $i$ and $j$ should be in the same cluster in Algorithm~\ref{alg:modified-filter} and hence should not be in $R_i$ simultaneously. If $i>j$, this implies $(u,v)\in E$ and hence $(u,w)$ is a forward edge that should be removed.
\end{proof}

\subsection{Round-or-cut}

To solve the LP, we utilize the round-or-cut framework in~\cite{10.1145/3338513}. We use the ellipsoid algorithm on $\pcov$. In each iteration, after getting a solution $\cov$, we either find a feasible collection of paths which implies a approximate \pknapso solution, or we can give the ellipsoid algorithm a separating hyperplane. See Algorithm~\ref{alg:modified-layering}.

\begin{algorithm}[ht]
    \caption{Find Paths 2}
    \label{alg:modified-layering}
	\begin{algorithmic}[1]
		\Require Metric $(X=C\cup F,d)$, radius function $r$
        \State Start an ellipsoid algorithm $\mathcal{E}$ on $\pcov$
        \While {true}
        \State $\{\cov(v):v\in C\} \gets \mathcal{E}$
        \For {$i=1$ to $t$}
        \State $R_i,\{D(v):v\in R_i\}\gets$ \text{Modified Filter}$((L_i\cup F,d),r,\cov, 4^i)$
        \EndFor
        \State Construct contact forest $G=(V,E)$ according to Definition~\ref{def:contact-forest}
        \State Get a solution $P$ for \wknappp on $G$
        \If {$\val(P)\geq m$}
            \State \Return $P$
        \Else
            \State $\mathcal{E}\gets$ a separating hyperplane $\sum_{v\in C}\lambda(v)\cov(v)<m$ \label{line:separate}.
        \EndIf
        \EndWhile
		\Ensure $P$
	\end{algorithmic}
\end{algorithm}

\begin{lemma}[c.f. Lemma 10 of ~\cite{10.1145/3338513}]
\label{lem:in-polytope}
If $\{\lambda(v)\in \mathbb{R}:v\in C\}$ satisfies
\[
    \sum_{\substack{v\in C:\\d(v,S)\leq r_v}} \lambda(v)< m \quad\forall S\in \mathscr{F},
\]
then any $\cov \in \pcov$ satisfies
\[
    \sum_{v\in C} \lambda(v)\cov(v)< m.
\]
\end{lemma}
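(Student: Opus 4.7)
My plan is to unfold the definition of $\pcov$ coordinate-by-coordinate, observe that the $\min$ in the definition is redundant under the other constraints, and then finish by a double-counting argument that invokes the hypothesis.

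First I would fix an arbitrary $\cov \in \pcov$ and pull back a witnessing vector $\{z_S\}_{S \in \mathscr{F}}$ realizing it, so that $z_S \geq 0$, $\sum_{S \in \mathscr{F}} z_S \leq 1$, and $\cov(v) = \min\bigl(\sum_{S\in\mathscr{F}:\,d(v,S)\le r_v} z_S,\, 1\bigr)$ for each $v \in C$. The key observation is that because every $z_S \geq 0$ and $\sum_S z_S \leq 1$, every partial sum $\sum_{S:\,d(v,S)\le r_v} z_S$ already lies in $[0,1]$. Hence the truncation at $1$ is vacuous, the minimum collapses to the plain sum, and one may simply write $\cov(v) = \sum_{S\in\mathscr{F}:\,d(v,S)\le r_v} z_S$.

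Next I would swap the order of summation to obtain
\[
\sum_{v \in C} \lambda(v)\,\cov(v) \;=\; \sum_{v \in C} \lambda(v) \sum_{\substack{S \in \mathscr{F}:\\ d(v,S)\le r_v}} z_S \;=\; \sum_{S \in \mathscr{F}} z_S \Biggl(\sum_{\substack{v \in C:\\ d(v,S)\le r_v}} \lambda(v)\Biggr).
\]
Denote the inner sum by $a_S$. By hypothesis $a_S < m$ for every $S \in \mathscr{F}$, and combining this with $z_S \ge 0$ and $\sum_S z_S \le 1$ immediately yields the soft bound $\sum_v \lambda(v)\cov(v) \le m \cdot \sum_S z_S \le m$.

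The main (and essentially only) obstacle is tightening this $\le m$ to the strict $< m$ demanded by the statement, since $\lambda$ is allowed to take negative values and one cannot upgrade the inequality by pure monotonicity. My plan is a short case split: if every $z_S = 0$ then $\cov \equiv 0$ and the left-hand side is $0 < m$ (using $m \ge 1$); otherwise pick any $S^\ast$ with $z_{S^\ast} > 0$, for which $z_{S^\ast} a_{S^\ast} < z_{S^\ast}\,m$ strictly, and this strict slack is preserved when the remaining weak inequalities $z_S a_S \le z_S m$ are added in. This gives the required strict inequality and completes the proof.
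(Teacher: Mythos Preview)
Your proof is correct and follows the same interchange-of-summation approach as the paper. Your treatment is in fact more careful than the paper's: you observe that the $\min$ in the definition of $\cov(v)$ is vacuous (so the first step is an \emph{equality}, which is needed since the lemma allows $\lambda(v)<0$), and you explicitly justify the strict inequality via a case split on whether some $z_S>0$---details the paper's three-line chain glosses over.
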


\begin{proof}
\begin{align*}
\sum_{v\in C} \lambda(v)\cov(v) &\leq \sum_{v\in C} \lambda(v)\sum_{\substack{S\in \mathscr{F}:\\d(v,S)\leq r_v}}z_S\\
& = \sum_{S\in \mathscr{F}} z_S \sum_{\substack{v\in C:\\d(v,S)\leq r_v}} \lambda(v)\\
& < \sum_{s\in \mathscr{F}} z_S\cdot m \leq m.
\end{align*}
\end{proof}

\begin{lemma}[c.f. Lemma 23 of~\cite{bajpai2022revisiting}]
\label{lem:separate}
Each time at Line~\ref{line:separate} of Algorithm~\ref{alg:modified-layering}, we have
\[
    \sum_{\substack{v\in C:\\d(v,S)\leq r_v}} \lambda(v)\leq m \quad\forall S\in \mathscr{F}
\]
and
\[
    \sum_{v\in C} \lambda(v)\cov(v)\geq m
\]
where
\[
    \lambda(v)=\begin{cases}
        |D(v)|& v\in V\\
        0& \text{otherwise}
    \end{cases}.
\]
\end{lemma}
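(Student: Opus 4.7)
The approach is to verify the two inequalities separately. The first is the technical core: given any $S\in\mathscr{F}$, I will exhibit a \wknappp-feasible set of paths $P$ in the contact forest with $\val(P)\geq \sum_{v\in C\,:\,d(v,S)\leq r_{v}}\lambda(v)$. Since Algorithm~\ref{alg:modified-layering} reaches Line~\ref{line:separate} only when the \wknappp optimum is strictly less than $m$ and $\val$ is integer valued, this gives $\sum_{v\in C\,:\,d(v,S)\leq r_{v}}\lambda(v)\leq\val(P)\leq m-1\leq m$. The second inequality is a short calculation using analogues of Lemma~\ref{lem:filter} for Algorithm~\ref{alg:modified-filter}.

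The key step is an \emph{ancestor lemma} for the contact forest: whenever $(u,v)$ is an edge of the underlying contact DAG, $u$ is a forest-ancestor of $v$. Two ingredients coming out of the forest-construction proof drive the argument: in any layer, each vertex has at most one DAG-predecessor (otherwise the two predecessors would have been clustered together by Algorithm~\ref{alg:modified-filter}'s $4^{i}$ cushion), and two DAG-predecessors of a common vertex in different layers are themselves joined by a DAG edge. I would prove the ancestor lemma by walking up the parent chain, defining $u_{0}=v$ and $u_{k+1}$ to be the unique forest-parent of $u_{k}$; the layer indices of the $u_{k}$ strictly increase. At each step the transitivity keeps $u$ a DAG-predecessor of $u_{k+1}$, while the layer-uniqueness forces $u_{k+1}=u$ as soon as the chain reaches $u$'s layer. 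Given the lemma, for each $f\in S$ set $T_{f}:=\{v\in V : d(f,v)\leq r_{v}\}$: the modified filter gives $|T_{f}\cap R_{i}|\leq 1$ for every $i$, and for $u,v\in T_{f}$ in layers $i>j$ the triangle inequality yields $d(u,v)\leq r_{u}+r_{v}\leq r_{u}+r_{v}+4^{j}$, so $(u,v)\in E$. The ancestor lemma then puts all of $T_{f}$ on a single root-to-leaf path of the forest, with the lowest-layer element $v_{f}$ as a common descendant. Let $p_{f}$ be the path from the root of that tree down to $v_{f}$; since $f$ itself covers $v_{f}$ within $r_{v_{f}}$, we have $w'(v_{f})\leq w(f)$. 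Taking $P:=\{p_{f}:f\in S\}$ yields $\sum_{p\in P}w'(\mathrm{sink}(p))\leq w(S)\leq k$, so $P$ is feasible, and
\[
\val(P)=\sum_{v\in\bigcup_{f} p_{f}}\lambda(v)\;\geq\;\sum_{v\in\bigcup_{f} T_{f}}\lambda(v)\;=\;\sum_{v\in C\,:\,d(v,S)\leq r_{v}}\lambda(v),
\]
using $\lambda(v)=0$ for $v\notin V$.

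For the second inequality, the modified filter still satisfies $\cov(v)\geq \cov(u)$ for every $u\in D(v)$, and $\{D(v):v\in V\}$ partitions $C$. Therefore
\[
\sum_{v\in C}\lambda(v)\cov(v)=\sum_{v\in V}|D(v)|\cov(v)\;\geq\;\sum_{v\in V}\sum_{u\in D(v)}\cov(u)=\sum_{u\in C}\cov(u)\;\geq\;m,
\]
the last inequality using $\cov\in\pcov$: since $\pcov$ is the convex hull of integral \pknapso solutions and each such solution covers at least $m$ clients, the same coverage bound survives every convex combination. The main obstacle throughout is the ancestor lemma; without it, a single facility $f$'s set $T_{f}$ could fragment across several forest-paths, each charging its own $w'$-weight to the knapsack budget, and the inequality $\sum_{p\in P}w'(\mathrm{sink}(p))\leq w(S)$ would fail. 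This is precisely why the $4^{i}$ slack introduced in Section~\ref{sec:dag-to-forest} is calibrated to force enough transitivity among DAG edges to collapse $T_{f}$ onto one forest-path.
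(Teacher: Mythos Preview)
Your argument for the first inequality matches the paper's: for each $f\in S$ the representatives it covers lie on a single forest path $p_{f}$, the collection $\{p_{f}:f\in S\}$ is \wknappp-feasible with value at least the desired $\lambda$-sum, and this value is below the \wknappp optimum, which is $<m$ at Line~\ref{line:separate}. Your ancestor lemma makes explicit a step the paper merely asserts. One patch to your sketch: the parent chain can \emph{skip} $u$'s layer, so ``as soon as the chain reaches $u$'s layer'' may never trigger. But if $u_{k}$ sits below layer $i$ and its forest-parent $u_{k+1}$ sits strictly above, the same transitivity gives a DAG edge $(u_{k+1},u)$, whence $u_{k+1}\to u\to u_{k}$ is a longer DAG path and $(u_{k+1},u_{k})$ would have been pruned as a forward edge, a contradiction. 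With this case handled the ancestor lemma is complete.

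Your justification for $\sum_{u\in C}\cov(u)\geq m$, however, is incorrect on two counts. First, the ellipsoid query $\cov$ need not lie in $\pcov$; Line~\ref{line:separate} exists precisely to separate $\cov$ from $\pcov$, so assuming membership begs the question. Second, even points of $\pcov$ carry no coverage guarantee: $\pcov$ is the projection of coverage vectors over \emph{all} knapsack-feasible $S\in\mathscr{F}$, with no requirement that $S$ cover $m$ clients, so ``each integral solution covers at least $m$ clients'' is false. The correct reason (used without comment in the paper's own proof) is that in the round-or-cut framework the separation oracle first tests the explicit linear constraint $\sum_{v\in C}\cov(v)\geq m$ and returns that as the hyperplane when it fails; Line~\ref{line:separate} is reached only after that test passes, and then your chain $\sum_{v\in V}|D(v)|\cov(v)\geq\sum_{u\in C}\cov(u)\geq m$ goes through.
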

\begin{proof}
Fix any $S\in \mathscr{F}$. For some $f\in S$, let $A_f=\{v\in V:d(f,v)\leq r_v\}$ be the set of vertices $f$ can cover and $p_f$ be the path in contact forest connecting vertices in $A_f$ in topological order. Clearly $P'=\bigcup_{f\in S}p_f$ is a feasible solution to \wknappp. Since $P$ is the optimal solution and $\val(P)<m$, we have
\[
    m>\val(P)\geq \val(P')\geq \sum_{v\in \bigcup_{f\in S}A_f} \lambda(v) = \sum_{\substack{v\in V:\\d(v,S)\leq r_v}} \lambda(v) = \sum_{\substack{v\in C:\\d(v,S)\leq r_v}} \lambda(v).
\]

On the other hand, we have
\begin{align*}
\sum_{v\in C} \lambda(v)\cov(v) & =\sum_{v\in V} \lambda(v)\cov(v)\\
& =\sum_{v\in V} |D(v)|\cov(v)\\
& \geq \sum_{v\in V} \sum_{u\in D(v)} \cov(u)\\
& = \sum_{v\in C} \cov(v)\geq m.
\end{align*}
\end{proof}

\begin{theorem}
\label{thm:knapsack}
There is a $17$-approximation algorithm for \pknapso.
\end{theorem}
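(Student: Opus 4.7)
The plan is to invoke Algorithm~\ref{alg:modified-layering} and argue (a) that it terminates in polynomial time with a set of paths $P$ satisfying $\val(P) \geq m$, and (b) that such a $P$ can be converted into a feasible \pknapso solution with approximation ratio $17$. With the standing assumption $\alpha^\ast = 1$, the polytope $\pcov$ is non-empty. At each iteration the ellipsoid produces a point $\cov$; Algorithm~\ref{alg:modified-layering} either succeeds (returns $P$ with $\val(P) \geq m$), or by Lemma~\ref{lem:separate} produces a value function $\lambda$ satisfying $\sum_{v: d(v,S) \leq r_v} \lambda(v) < m$ for every $S \in \mathscr{F}$ together with $\sum_v \lambda(v) \cov(v) \geq m$. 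By Lemma~\ref{lem:in-polytope} the inequality $\sum_v \lambda(v)\cov'(v) < m$ then holds for every $\cov' \in \pcov$, so the current $\cov$ is separated from $\pcov$ by a valid hyperplane. Since $\pcov$ has positive volume, polynomially many ellipsoid iterations suffice before success must be declared.

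Given a successful $P$, I would build the facility solution by opening, for each path $p \in P$, a cheapest facility covering its sink: $f_p \in \arg\min_{f:\, d(f,\text{sink}(p)) \leq r_{\text{sink}(p)}} w(f)$. Then $w(f_p) = w'(\text{sink}(p))$, so the total weight is at most $k$ by feasibility of $P$ in \wknappp. The claim will be that every client in $\bigcup_{v \in \bigcup_{p \in P} p} D(v)$ is covered within $17$ times its radius. Since the clusters $D(v)$ are pairwise disjoint (both across layers and, within a layer, by Algorithm~\ref{alg:modified-filter}), their total cardinality equals $\val(P) \geq m$, meeting the outlier requirement.

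The approximation calculation is the main technical step. Fix a path $u_j \to u_{j+1} \to \cdots \to u_s$ with $u_\ell \in R_{i_\ell}$ and $i_j > \cdots > i_s$, write $\rho_\ell := 4^{i_\ell}$, and take $v \in D(u_j)$. Combining the Modified Filter guarantee $d(v, u_j) \leq r_v + r_{u_j} + \rho_j$, the contact-forest bound $d(u_\ell, u_{\ell+1}) \leq r_{u_\ell} + r_{u_{\ell+1}} + \rho_{\ell+1}$, and $d(u_s, f_p) \leq r_{u_s}$ via triangle inequality, and using $r_{u_\ell} < \rho_\ell$, the $r_{u_\ell}$ terms telescope into the $\rho_\ell$ terms to yield
\[
d(v, f_p) \leq r_v + 3 \sum_{\ell=j}^s \rho_\ell.
\]
The geometric sum $\sum_{\ell=j}^s \rho_\ell \leq \tfrac{4}{3}\rho_j$ (since consecutive $\rho$'s decrease by at least a factor of $4$) combined with the layer lower bound $r_v \geq \rho_j / 4$ gives $d(v, f_p) \leq r_v + 4\rho_j \leq 17\, r_v$. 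The main obstacle is the bookkeeping in the triangle-inequality expansion: three distinct sources of additive slack (the filter's $\rho_j$, the edge's $\rho_{\ell+1}$, and the $r_{u_\ell}$'s appearing twice each) must be collected into a single geometric series. The base $4$ used for the layering is precisely what tightens this into the clean constant $17$.
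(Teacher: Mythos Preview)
Your proposal is correct and follows essentially the same approach as the paper: the round-or-cut argument via Lemmas~\ref{lem:in-polytope} and~\ref{lem:separate}, the choice of a cheapest facility at the sink of each path, and the triangle-inequality chain along the contact forest are all identical in spirit. Your distance calculation is in fact slightly cleaner than the paper's---rather than assuming the worst-case path that hits every consecutive layer, you bound an arbitrary path directly via the geometric series $\sum_\ell \rho_\ell \le \tfrac{4}{3}\rho_j$---but the arithmetic and the final constant $17$ coincide; just make explicit that the bound for $v\in D(u_j)$ applies to every vertex $u_\ell$ on the path by taking the sub-path from $u_\ell$ down to the sink.
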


\begin{proof}
Combining Lemma~\ref{lem:in-polytope} and Lemma~\ref{lem:separate}, we see that $\sum_{v\in C} \lambda(v)\cov(v)<m$ is indeed a separating hyperplane. Hence the correctness of Algorithm~\ref{alg:modified-layering}.

For the approximation guarantee, we provide a similar proof to Theorem~\ref{thm:same-radii}. For a path $p\in P$, consider the worst case that it has a vertex $u_i$ from each layer $L_i$. Open the facility
\[
    f^*= \arg\min_{\substack{f\in F:\\d(f,\text{sink}(p))\leq r_{\text{sink}(p)}}} w(f).
\]
For a vertex $v\in D(u_i)$, we have

\begin{align*}
\frac{d(v,f^*)}{r_v} &\leq \frac{1}{r_v}\big(d(v,u_i)+d(u_i,u_{i-1})+\dots+d(u_2,u_1)+d(u_1,f^*)\big)\\
&\leq \frac{1}{r_v}\big((r_v+r_{u_i}+4^i)+(r_{u_i}+r_{u_{i-1}}+4^{i-1})+\dots + (r_2+r_1+4)+r_1\big)\\
&\leq \frac{1}{r_v}\big((r_v+2\cdot4^i)+(4^i+2\cdot 4^{i-1})+\dots + (4^2+2\cdot 4)+4\big)\\
&\leq 1 + 3\cdot (4+1+4^{-1}+\dots+4^{2-i})\\
& = 1+ 4\cdot(4-4^{1-i}) \leq 17.
\end{align*}

\end{proof}

\section{Priority Colorful $k$-Supplier}
\label{sec:pcks}

In this section, we discuss the Priority Colorful $k$-Supplier problem.

\begin{definition}[Priority Colorful $k$-Supplier (\pcks)]
The input is a metric space $(X=C\cup F,d)$, a radius function $r:C\to \mathbb{R}_{>0}$, a partition $\{C_1,C_2,\dots,C_c\}$ of $C$ into $c$ colors, a coverage requirement $0\leq m_i\leq |C_i|$ for each color $1\leq i\leq c$, and a parameter $k$. The goal is to find $S\subseteq F$ of size at most $k$ to minimize $\alpha$ such that there are at least $m_i$ vertices $v$ from each color $i$ satisfying $d(v,S)\leq \alpha \cdot r_v$.
\end{definition}

\begin{definition}[\pcks LP]
\begin{align*}
\sum_{v\in C_i} \cov (v) & \geq m_i & 1\leq i\leq c\\
\sum_{f\in F} x_f & \leq k\\
\cov(v) &= \min(\sum_{\substack{f \in F:\\ d(f,v) \leq r_v}} x_f, 1)& \forall v\in C\\
0 \leq x_v & \leq 1 &\forall v\in C\\
\end{align*}
\end{definition}

Follow the construction in Section~\ref{sec:dag-to-forest}, we can obtain a directed out-forest.

\begin{definition}[Weighted Colorful $k$-Path Packing (\wckpp)]
The input is a directed out-forest $G=(V,E)$, for each $1\leq i\leq c$ a value function $\lambda_i: V\to \{0,1,\dots,n\}$ for some integer $n$, and parameters $k$, $m_i$ for $1\leq i \leq c$. The goal is to find a set of $k$ paths $P\subseteq \mathcal{P}(G)$ such that:
\[
    \sum_{v\in \bigcup_{p\in P} p} \lambda_i(v)\geq m_i \quad \forall 1\leq i\leq c.
\]
\end{definition}

Denote by $L$ the leaves in the forest and $T_v$ the subtree rooted at $v$. Since $\lambda_i(v)\geq 0$, we only consider picking the paths from some leaf $v\in L$ to the root. It is clear that we will always choose paths from a leaf to the root.

The following is the natural LP relaxation of the \wckpp problem. $y_v$ for $v\in L$ indicates how much the path from $v$ to the root is chosen. $z_v$ for $v\in V\setminus L$ indicates how much $v$ is covered by the chosen paths. We separately take out the constraint for the first color as the objective value, so that we can save one constraint and thereby save one additional center.

\begin{align}
\max \quad & \sum_{v\in L} \lambda_1(v)y_v + \sum_{v\in V\setminus L} \lambda_1(v)z_v\tag{\wckpp LP}\\
& \sum_{v\in L} \lambda_i(v)y_v + \sum_{v\in V\setminus L} \lambda_i(v)z_v \geq m_i& 2\leq i\leq c\label{constraint:mi}\\
& z_v \leq \sum_{u\in T_v} y_u\label{constraint:yz} & \forall v\in V\setminus L\\
& \sum_{v\in L} y_v\leq k\label{constraint:k}\\
& 0 \leq y_v \leq 1& \forall v\in L\label{constraint:y}\\
& 0 \leq z_v \leq 1& \forall v\in V\setminus L\label{constraint:z}
\end{align}

Similar to the algorithm in Section~\ref{sec:dag-to-forest}, suppose $t=\lceil \log_{4}r_{max}\rceil$, we partition the vertex set $C$ into $t$ layers such that $L_i=\{v\in C:4^{i-1}\leq r_v < 4^i\}$. In each layer $i$, run Algorithm~\ref{alg:modified-filter} with additional parameter $4^i$ on $L_i$ and get the set of representatives $R_i$ and corresponding clusters $\{D(v):v\in R_i\}$. Then build the contact forest according to Definition~\ref{def:contact-forest}.

\begin{algorithm}[ht]
    \caption{Find Paths 3}
    \label{alg:colorful-layering}
	\begin{algorithmic}[1]
		\Require Metric $(X=C\cup F,d)$, radius function $r$, and LP solution $\cov$
        \For {$i=1$ to $t$}
        \State $R_i,\{D(v):v\in R_i\}\gets$ \text{Modified Filter}$((L_i\cup F,d),r,\cov,4^i)$
        \EndFor
        \State Construct contact forest $G=(V,E)$ according to Definition~\ref{def:contact-forest}
        \State Get an extreme point solution $(y^*,z^*)$ to \wckpp LP
        \State $S\gets \{v\in L: y^*_v>0\}$
		\Ensure $S$
	\end{algorithmic}
\end{algorithm}

\begin{lemma}
\label{lem:colorful-km}
There exists a solution to \wckpp LP with objective value at least $m_1$.
\end{lemma}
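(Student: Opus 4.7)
The plan is to produce a fractional solution $(y,z)$ to the \wckpp LP of objective value at least $m_1$ from the \pcks LP solution $(x,\cov)$, mimicking the fractional-solution constructions used in Lemma~\ref{lem:path-cover} and Lemma~\ref{lem:separate}. The new difficulty is that \wckpp restricts paths to start at leaves, so each facility $f$ must be routed to a specific leaf of the contact forest rather than terminating at an arbitrary vertex of $A_f:=\{v\in V:d(f,v)\le r_v\}$. The structural fact I would establish first is that $A_f$ is totally ordered by ancestor-descendant in the contact forest: any two $u,v\in A_f$ in different layers satisfy $d(u,v)\le r_u+r_v$, so $(u,v)$ is an edge of the contact DAG, and since transitive reduction preserves reachability, one of $u,v$ is a forest-ancestor of the other. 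Hence there is a unique deepest vertex $u_f\in A_f$, and every other vertex of $A_f$ is an ancestor of $u_f$.

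For each $f$ with $x_f>0$ I would pick an arbitrary leaf $\ell_f\in T_{u_f}\cap L$ and set
\[
y_\ell\;:=\;\min\!\Bigl(1,\;\sum_{f:\ell_f=\ell}x_f\Bigr)\quad(\ell\in L),\qquad z_v\;:=\;\cov(v)\quad(v\in V\setminus L).
\]
The bounds $y_\ell,z_v\in[0,1]$ and $\sum_{\ell}y_\ell\le\sum_f x_f\le k$ take care of \eqref{constraint:k} together with~\eqref{constraint:y} and~\eqref{constraint:z}. The main obstacle is verifying the subtree constraint \eqref{constraint:yz} in the presence of the cap at $1$, and I would handle it by a case split: if some leaf $u^*\in T_v$ has $y_{u^*}=1$, then $\sum_{u\in T_v\cap L}y_u\ge 1\ge z_v$; otherwise no cap is active inside $T_v$, and
\[
\sum_{u\in T_v\cap L}y_u\;=\;\sum_{f:\ell_f\in T_v}x_f\;\ge\;\sum_{f:v\in A_f}x_f\;\ge\;\cov(v)\;=\;z_v,
\]
using that $v\in A_f$ implies $u_f\in T_v$ and hence $\ell_f\in T_{u_f}\subseteq T_v$.

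To finish, I would observe that if a leaf $\ell\in L$ lies in $A_f$ then $\ell$ having no forest descendants forces $\ell=u_f$, hence $\ell_f=\ell$, giving $y_\ell\ge\cov(\ell)$. Combining with $z_v=\cov(v)$ on $V\setminus L$ and the filter inequality $\cov(v)\ge\cov(u)$ for $u\in D(v)$, each color $i$ (with $\lambda_i(v):=|D(v)\cap C_i|$) satisfies
\[
\sum_{v\in L}\lambda_i(v)y_v+\sum_{v\in V\setminus L}\lambda_i(v)z_v\;\ge\;\sum_{v\in V}\lambda_i(v)\cov(v)\;\ge\;\sum_{u\in C_i}\cov(u)\;\ge\;m_i.
\]
Taking $i=1$ gives the required objective value $\ge m_1$ (and the inequalities with $i\ge 2$ are exactly the constraints~\eqref{constraint:mi}).
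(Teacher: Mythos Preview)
Your proof is correct and follows essentially the same approach as the paper: route each facility's mass $x_f$ down to a leaf below the deepest vertex $u_f$ of the chain $A_f$, then verify that the resulting $(y,z)$ dominates $\cov$ on $V$ and invoke the filter inequality. The only cosmetic difference is that the paper water-fills $x_f$ across possibly several leaves of $T_{u_f}\cap L$, whereas you send it all to one fixed leaf $\ell_f$ and handle the cap with a clean case split on whether some leaf in $T_v$ is saturated; your version is arguably tidier and makes the chain structure of $A_f$ (which the paper uses implicitly) explicit.
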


\begin{proof}
For $f\in F$, let $A_f=\{v\in V:d(f,v)\leq r_v\}$ be the set of vertices $f$ can cover. By the nature of filtering algorithm, $A_f$ contains at most one vertex from each $C_i$. Suppose $v$ is the vertex in $A_f$ with smallest $r_v$. We increase some $y_u<1$ for $u\in T_v\cap L$ by a total amount of $x_f$ and increase $z$ values for the ancestors of $u$ accordingly. All these increase are capped by 1. In this process, it is guaranteed that for all $v\in L$ :
\[
    y_v\geq \sum_{v\in A_f} x_f = \sum_{\substack{f\in F:\\d(f,r)\leq r_v}} x_f=\cov(v).
\]
and for all $v\in V\setminus L$ :
\[
    z_v\geq \sum_{v\in A_f} x_f = \sum_{\substack{f\in F:\\d(f,r)\leq r_v}} x_f=\cov(v).
\]
Hence for any $1\leq i\leq c$, we have

\begin{align*}
\sum_{v\in L} \lambda_i(v)y_v+\sum_{v\in V\setminus L} \lambda_i(v)z_v&\geq
\sum_{v\in V} \lambda_i(v)\cov(v)\\
&= \sum_{v\in V} |D(v)\cap C_i|\cov(v)\\
&\geq \sum_{v\in V}\sum_{u\in D(v)\cap C_i} \cov(u)\\
&= \sum_{v\in C_i} \cov(v)\geq m_i.
\end{align*}

The proof is completed by noting that
\[
    \sum_{v\in L}y_v\leq \sum_{f\in F}x_f\leq k.
\]
\end{proof}

\begin{lemma}
In any extreme point feasible solution to a linear program, the number of linearly independent tight constraints is equal to the number of variables.
\end{lemma}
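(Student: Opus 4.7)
The statement is the classical algebraic characterization of vertices of a polyhedron, so the plan is to prove both directions of the equivalence by standard linear-algebra arguments on the system of tight constraints. I would write the LP in the form $P = \{x \in \mathbb{R}^n : Ax \leq b\}$ (absorbing equalities as pairs of inequalities), let $x^* \in P$ be an extreme point, and denote by $A_=$ the submatrix of rows $a_i^T$ for which $a_i^T x^* = b_i$. The goal is to show $\operatorname{rank}(A_=) = n$.

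For the contrapositive of the forward direction, I would assume $\operatorname{rank}(A_=) < n$ and pick a nonzero $d \in \ker(A_=)$. For each strict constraint $a_j^T x^* < b_j$, continuity gives a small $\epsilon > 0$ such that $a_j^T(x^* \pm \epsilon d) < b_j$; for each tight row $a_i^T(x^* \pm \epsilon d) = a_i^T x^* \pm \epsilon a_i^T d = b_i$. Hence both $x^* + \epsilon d$ and $x^* - \epsilon d$ lie in $P$, and $x^* = \tfrac{1}{2}(x^* + \epsilon d) + \tfrac{1}{2}(x^* - \epsilon d)$ contradicts the assumption that $x^*$ is extreme.

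For the reverse direction, assuming $\operatorname{rank}(A_=) = n$, the system $A_= x = b_=$ has $x^*$ as its unique solution. If $x^* = \lambda y + (1-\lambda) z$ with $y,z \in P$ and $\lambda \in (0,1)$, then for every tight row $i$, $b_i = a_i^T x^* = \lambda\, a_i^T y + (1-\lambda)\, a_i^T z \leq \lambda b_i + (1-\lambda) b_i = b_i$, which forces $a_i^T y = a_i^T z = b_i$. Thus $y$ and $z$ both satisfy $A_= x = b_=$, and uniqueness yields $y = z = x^*$, so $x^*$ is extreme.

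There is no real obstacle here: this is a textbook result (see, e.g., Schrijver's reference already cited in the paper), and the two directions above give the equality between the number of variables and the dimension of the span of tight-constraint normals. In the paper's application to \wckpp, the variables are $y_v$ for $v \in L$ and $z_v$ for $v \in V \setminus L$, and this lemma will be used in the next step to bound the number of fractional variables in an extreme point by the number of non-box tight constraints (coming from (\ref{constraint:mi}), (\ref{constraint:yz}), and (\ref{constraint:k})), which is what drives the pseudo-approximation guarantee.
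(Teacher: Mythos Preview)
Your proof is correct and is the standard textbook argument. The paper, however, states this lemma without proof, treating it as a well-known fact from polyhedral theory (consistent with the reference to Schrijver's book elsewhere in the paper), so there is no paper proof to compare against. Your write-up would serve well as a self-contained justification; the only minor remark is that your final paragraph's description of how the lemma is applied slightly understates the subtlety in Lemma~\ref{lem:extreme-point}, where the counting of tight constraints from (\ref{constraint:yz}) and (\ref{constraint:z}) requires the tree-by-tree argument about ``tight'' internal vertices rather than a direct global count.
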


\begin{lemma}
\label{lem:extreme-point}
For an extreme point solution $(y^*,z^*)$ to \wckpp LP, there are at most $2c$ strictly fractional $y^*_v$.
\end{lemma}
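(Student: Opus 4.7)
My plan is to count linearly independent tight constraints at the extreme point $(y^*,z^*)$, peel off the $z$-columns first, and then bound the contribution of tight~\eqref{constraint:yz} rows in the $y$-column space using the laminar structure of the subtrees $\{T_v\}_{v\in V\setminus L}$.

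Let $F=\{v\in L:0<y^*_v<1\}$ and $G=\{v\in V\setminus L:0<z^*_v<1\}$, and let $\mathcal{T}\subseteq V\setminus L$ be the set of internal nodes at which~\eqref{constraint:yz} is tight. At an extreme point the tight non-bound constraints, projected onto the $|F|+|G|$ fractional columns, must span a space of dimension $|F|+|G|$ (after accounting for the $|V|-|F|-|G|$ tight bound constraints). The non-bound rows are the $c-1$ color rows~\eqref{constraint:mi}, the knapsack row~\eqref{constraint:k}, and the $|\mathcal{T}|$ tight~\eqref{constraint:yz} rows. For $v\in\mathcal{T}\cap G$ the projected~\eqref{constraint:yz} row has coefficient $+1$ on the $z_v$-column (and some $-1$'s on $y$-columns), so these $|\mathcal{T}\cap G|$ rows pivot cleanly on the $z$-columns of $\mathcal{T}\cap G$. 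The $|G\setminus\mathcal{T}|$ remaining fractional $z$-columns appear only in the color rows (since~\eqref{constraint:yz} is not tight there and~\eqref{constraint:k} does not involve $z$), forcing $|G\setminus\mathcal{T}|\le c-1$ and, after row reduction, consuming exactly $|G\setminus\mathcal{T}|$ color rows as $z$-pivots; this leaves $(c-1-|G\setminus\mathcal{T}|)+1$ rows from the color/knapsack side free to pivot in the $y$-columns.

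The heart of the argument is the analysis of the tight~\eqref{constraint:yz} rows at $v\in\mathcal{T}\setminus G$, where $z^*_v\in\{0,1\}$. If $z^*_v=0$ then $\sum_{u\in T_v\cap L}y^*_u=0$ forces every $y^*_u$ in the subtree to vanish, so the projected row is identically zero; and if $z^*_v=1$ with some $y^*_u=1$, the tight sum is already saturated, again killing any fractional $y$ in $T_v$. The only surviving rows therefore come from
\[
A^* \;=\; \{\,v\in\mathcal{T}\setminus G : z^*_v=1 \text{ and } T_v\cap L\cap F\neq\emptyset\,\},
\]
each reading $\sum_{u\in T_v\cap L\cap F}y_u=1$. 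The crucial structural step is that for $v,w\in A^*$ with $w$ a proper ancestor of $v$, subtracting the two tight equations gives $\sum_{u\in(T_w\setminus T_v)\cap L\cap F}y^*_u=0$, and strict positivity of fractional $y^*_u$ forces $(T_w\setminus T_v)\cap L\cap F=\emptyset$, so the two row vectors coincide. Moreover, if some $a\in A^*$ had two incomparable descendants $b,c\in A^*$, laminarity makes $T_b$ and $T_c$ disjoint, but the previous step would identify both of their fractional subsets with $T_a\cap L\cap F$, forcing them equal yet disjoint, a contradiction. Hence $A^*$ decomposes as a disjoint union of chains, the distinct rows it produces biject with the $\subseteq$-minimal elements $A^*_{\min}$, and the sets $\{T_v\cap L\cap F:v\in A^*_{\min}\}$ are pairwise disjoint subsets of $F$; since each sums to $1$ with every $y^*_u<1$, each contains at least two fractional leaves, giving $|A^*_{\min}|\le|F|/2$.

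Assembling the counts, the rows usable for $y$-pivots contribute rank at most $(c-1-|G\setminus\mathcal{T}|)+1+|A^*_{\min}|$, so
\[
|F| \;\le\; c-|G\setminus\mathcal{T}|+|A^*_{\min}| \;\le\; c+|F|/2,
\]
which rearranges to $|F|\le 2c$. The main obstacle I expect is the structural claim that $A^*$ is a disjoint union of chains: the two observations (nested $A^*$ rows coincide; no $a\in A^*$ admits two incomparable $A^*$-descendants) require the strict positivity of fractional coordinates to be used decisively. Once that collapse is established, the rank bookkeeping is routine.
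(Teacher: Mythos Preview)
Your proof is correct and follows essentially the same strategy as the paper: both arguments count linearly independent tight constraints at the extreme point, identify the ``tight'' internal nodes $v$ with $z^*_v=1=\sum_{u\in T_v\cap L}y^*_u$ as the only source of extra rank beyond the bound constraints, observe that nested such nodes yield identical (hence redundant) rows while minimal ones have disjoint subtrees each containing at least two fractional leaves, and conclude $|F|\le 2c$. The paper organizes the count tree-by-tree via the deficits $\gamma_t$ and shows $\gamma_t\ge\tau_t/2$, whereas you carry out a single global rank computation with explicit block pivoting on the $z$-columns; your treatment of the fractional-$z$ nodes in $G\setminus\mathcal{T}$ is more explicit than the paper's, but the underlying ideas coincide.
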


\begin{proof}
In the \wckpp LP, except for the constraints~\ref{constraint:mi} and~\ref{constraint:k}, there are at least $|V|-c$ independent tight constraints.

Denote by $\mathcal{T}$ the set of trees in the forest. For each tree $t\in \mathcal{T}$, let
\[
    \gamma_t = \text{size}(t) - \text{linearly independent tight constraints in } t
\]
where $\text{size}(t)$ denotes the number of vertices in $t$. Note that $\gamma_t\geq 0$ and $\sum_{t\in \mathcal{T}} \gamma_t\leq |V|-(|V|-c)=c$. Denote by $\tau_t$ the number of strictly fractional leaves in $t$. We call a vertex $v\in V\setminus L$ \emph{tight} if constraints~\ref{constraint:yz} and~\ref{constraint:z} for $v$ is tight simultaneously. We pay special attention to tight vertices since we can begin with $\gamma_t=\tau_t$ and $\gamma_t$ can only be decreased by 1 due to the existance of a tight vertex in $t$.

Recall that $L$ denotes the leaves in the forest and $T_v$ denotes the subtree rooted at $v$.

\begin{claim}
If $u,v$ are both tight and $u\in T_v$, then $v$ does not provide an additional linearly independent tight constraint.
\end{claim}
This is because both $u,v$ are tight implies $x_w=0$ for all $w\in (T_v\setminus T_u)\cap L$. The constraint~\ref{constraint:yz} for $v$ is a linear combination of constraints~\ref{constraint:yz} and~\ref{constraint:z} for $u$ and constraint~\ref{constraint:y} for all $w\in (T_v\setminus T_u)\cap L$. Since all of them are tight, $v$ does not provide an additional linearly independent tight constraint.

\begin{claim}
If $v$ is tight, then there are $0$ or at least $2$ fractional leaves in $T_v$.
\end{claim}

Combining the above two claims, we conclude that if we start with $\gamma_t=\tau_t$, it can only be decreased by 1 at a tight vertex $v$ if $\nexists u\in T_v$ such that $u$ is tight and there are at least $2$ fractional leaves in $T_v$. Hence we have $\gamma_t\geq \frac{\tau_t}{2}$ for every $t\in \mathcal{T}$ and therefore
\[
    \sum_{t\in \mathcal{T}} \tau_t\leq 2 \sum_{t\in \mathcal{T}} \gamma_t\leq 2c.
\]

\end{proof}

\begin{theorem}
There is a $17$-approximation algorithm for \pcks using at most $k+2c-1$ centers. 
\end{theorem}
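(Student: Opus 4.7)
The plan is to run Algorithm~\ref{alg:colorful-layering}, and for each $v\in S$ open an arbitrary facility $f_v\in F$ with $d(f_v,v)\le r_v$ (which exists by the normalization $\alpha^*=1$). Three things need verification: (i) $|S|\le k+2c-1$; (ii) the selected paths hit clusters containing at least $m_i$ clients of each color $C_i$; and (iii) each covered client lies within $17\, r_w$ of its $f_v$.

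For (i), Lemma~\ref{lem:colorful-km} supplies a feasible fractional solution of the \wckpp LP achieving objective $\ge m_1$, so the extreme point $(y^*,z^*)$ used by Algorithm~\ref{alg:colorful-layering} is feasible with objective $\ge m_1$. By Lemma~\ref{lem:extreme-point} at most $2c$ of the $y^*_v$ are strictly fractional. The key observation is that the naive count gives $|S|\le k+2c$, but one can always shave one more: if every $y^*_v$ is integral then $|S|\le k\le k+2c-1$ directly from constraint~\ref{constraint:k}; otherwise at least one $y^*_v\in(0,1)$ makes $\sum_{\text{frac}} y^*_v>0$, forcing the integer $n_1:=|\{v\in L:y^*_v=1\}|$ to satisfy $n_1\le k-1$, and hence $|S|=n_1+\#\{\text{frac}\}\le (k-1)+2c=k+2c-1$.

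For (ii), define an integer indicator $\chi_u\in\{0,1\}$ to be $1$ iff some $v\in S$ is a descendant leaf of $u$ in the contact forest. Then $\chi_v\ge y^*_v$ for every leaf $v$ is immediate, and for an internal $u$ constraint~\ref{constraint:yz} forces that $z^*_u>0$ only when some descendant leaf has $y^*_w>0$ and hence lies in $S$, giving $\chi_u=1\ge z^*_u$ (and $z^*_u=0$ is trivial). With $\lambda_i(u):=|D(u)\cap C_i|$ and using that $\{D(u):u\in V\}$ partitions $C$,
\begin{equation*}
\sum_{u\in V}\lambda_i(u)\,\chi_u \;\ge\; \sum_{v\in L}\lambda_i(v)\, y^*_v+\sum_{u\in V\setminus L}\lambda_i(u)\, z^*_u \;\ge\; m_i,
\end{equation*}
where the last inequality is constraint~\ref{constraint:mi} for $i\ge 2$ and the objective lower bound for $i=1$. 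Thus at least $m_i$ clients of color $C_i$ lie in a cluster $D(u)$ with $\chi_u=1$.

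For (iii), the distance calculation inside the proof of Theorem~\ref{thm:knapsack} ports verbatim. For any chosen leaf $v=u_1\in S$, any ancestor $u_\ell$ on its root path (so $u_j\in R_j\subseteq L_j$), and any $w\in D(u_\ell)$, telescoping the triangle inequality using the Modified Filter bound $d(w,u_\ell)\le r_w+r_{u_\ell}+4^\ell$ and the contact-forest edge bound $d(u_j,u_{j-1})\le r_{u_j}+r_{u_{j-1}}+4^{j-1}$, together with $r_{u_j}\le 4^j$ and $d(u_1,f_v)\le r_{u_1}$, produces the same geometric-series collapse $1+4(4-4^{1-\ell})\le 17$. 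The main conceptual obstacle is really (i): Lemma~\ref{lem:extreme-point} is the combinatorial core that makes the pseudo-approximation possible, and the sharpening $k+2c\to k+2c-1$ requires noticing that constraint~\ref{constraint:k} cannot simultaneously be tight with $k$ integral $1$'s and any additional positive fractional mass. Steps (ii) and (iii) are straightforward transcriptions of the \pknapso analysis from Section~\ref{sec:pknapso}.
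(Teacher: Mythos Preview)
Your proposal is correct and follows essentially the same approach as the paper's proof, which simply cites Lemma~\ref{lem:colorful-km}, Lemma~\ref{lem:extreme-point}, and the distance analysis from Theorem~\ref{thm:knapsack}. In fact you supply strictly more detail than the paper does: the explicit argument that the presence of any strictly fractional $y^*_v$ forces $n_1\le k-1$ (yielding $k+2c-1$ rather than $k+2c$) is left implicit in the paper, and your indicator-variable argument for (ii) makes precise why rounding every positive $y^*_v$ up to $1$ preserves all $m_i$ coverage constraints.
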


\begin{proof}
For each $v\in S$, the output of Algorithm~\ref{alg:colorful-layering}, we open any facility $f\in F$ such that $d(f,v)\leq r_v$ and covers all clusters on the path from $v$ to the root. By Lemma~\ref{lem:colorful-km}, it satisfies the coverage requirements $m_i$. Also we are using at most $k+2c-1$ centers by Lemma~\ref{lem:extreme-point}. For the approximation guarantee, see the proof of Theorem~\ref{thm:knapsack}.
\end{proof}

In the following of this section, we focus on a special case of \pcks problem. We assume that the vertices with the same color have a same radius. That is, the partition $\{C_1,C_2,\dots,C_c\}$ is
\[
    C_1=\{v\in C:r_v=r_1\}, C_2=\{v\in C:r_v=r_2\}, \dots, C_c=\{v\in C:r_v=r_c\}.
\]
for some $r_1,r_2,\dots,r_c$. We refer to this special case as Uniform Priority Colorful $k$-Supplier (\upcks) problem.

We provide 2 improved algorithms for \upcks when there are only 2 kinds of colors. The first one is a modification of the algorithm for \pcks and the second one follows the framework in~\cite{anegg2022techniques}.

\begin{theorem}
There is a $(2+\sqrt{5})$-approximation algorithm using at most $k+1$ centers for \upcks where there are only 2 colors.
\end{theorem}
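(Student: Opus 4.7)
My plan is to specialize the \pcks framework of Section~\ref{sec:pcks} (Algorithm~\ref{alg:colorful-layering}) to the two-color uniform-radius setting, tightening the extreme-point bound of Lemma~\ref{lem:extreme-point} and tuning the modified-filter parameters to get a $(2+\sqrt{5})$-approximation with $k+1$ centers. Assume WLOG $r_1 \leq r_2$ and scale so that $\alpha^{*}=1$. I would apply Algorithm~\ref{alg:modified-filter} separately to $C_1$ with parameter $\ell_1 = (\sqrt{5}-1)r_1$ and to $C_2$ with parameter $\ell_2 = 2\sqrt{5}\, r_1$. Build a contact forest with edges from $R_2$-roots to $R_1$-children whenever $d(w,v) \leq r_w + r_v + \ell_1$; the forest property follows from $\ell_2 \geq 2(r_1 + \ell_1)$, exactly as in Section~\ref{sec:dag-to-forest}. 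Because there are only two radius-layers, every tree in the forest is a depth-two star: an $R_2$-root with some number of $R_1$-children (or an isolated root, or an isolated leaf).

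Next, I formulate and solve the \wckpp LP from Section~\ref{sec:pcks}; for $c=2$ it has a single coverage constraint (for $m_2$) beyond the budget. Lemma~\ref{lem:colorful-km} specialized to $c=2$ shows the LP has an optimum of value at least $m_1$. For an extreme point I would strengthen Lemma~\ref{lem:extreme-point}: the global budget $\sum_t \gamma_t \leq 2$ together with the depth-two star structure forces each fractional leaf either to consume one full unit of $\gamma$ on its own, or to pair with a second fractional leaf in the same star (the pairing witnessed by a tight $z_w=1$ together with a tight forest relation, which together spend one unit of $\gamma$ but force the children's $y$-sum to be integral). An iterative-rounding step that exchanges fractional mass against the single $m_2$-coverage constraint then concentrates all remaining fractionality inside one star whose fractional $y$-values sum to an integer. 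Rounding up that star's leaves opens at most one extra facility, so at most $k+1$ centers are opened.

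For the approximation guarantee, open one facility within $r_v$ of each selected leaf $v$. A cluster member $u \in D(v)$ with $v \in R_1$ is covered within $3r_1 + \ell_1 = (2+\sqrt{5})r_1$. A cluster member $u \in D(w)$ whose tree-root is $w \in R_2$ is covered within $3r_2 + 2r_1 + \ell_1 + \ell_2$; substituting the chosen $\ell_1,\ell_2$ this ratio equals $3 + (1+3\sqrt{5})r_1/r_2$, which equals $2+\sqrt{5}$ exactly at $r_2 = (4+\sqrt{5})r_1$ and is strictly better for larger $r_2$. In the small-gap regime $r_2 < (4+\sqrt{5})r_1$ the two radii differ by only a constant factor, and I would instead run a single-layer variant that filters $C_1$ before $C_2$ (so that any $C_2$-vertex landing in a $C_1$-cluster has coverage ratio at most $1+2/\beta \leq 3$) and solves a two-coverage-constraint LP; the matching-based extreme-point argument again loses at most one extra center, and retuning the filter parameter keeps the ratio below $2+\sqrt{5}$.

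The main obstacle is the strengthened extreme-point argument that gives only one extra center rather than two. The generic Lemma~\ref{lem:extreme-point} already bounds the number of fractional leaves by $2c=4$ and so would yield only $k+3$ facilities; getting down to $k+1$ requires the star-specific uncrossing step described above, which exploits the shallow forest depth to pair fractional leaves within a single star where their sum is forced to be integer by the tight constraints. A secondary difficulty is splicing the large-$\beta$ and small-$\beta$ regimes together so that the final approximation ratio is uniformly $2+\sqrt{5}$ without either sub-algorithm's analysis degrading.
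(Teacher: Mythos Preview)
Your proposal has two genuine gaps.

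\textbf{The $k+1$ center bound is not established.} Your ``strengthened extreme-point'' argument is a sketch at best. Lemma~\ref{lem:extreme-point} gives only $2c=4$ fractional leaves, and your claimed iterative-rounding step that ``concentrates all remaining fractionality inside one star whose fractional $y$-values sum to an integer'' is not proved; in a star with many leaves and a single coverage constraint there is no obvious uncrossing that forces this. The paper avoids this difficulty entirely by a structural trick you are missing: because leaves lie in $R_1\subseteq C_1$, every leaf $v$ has $\lambda_2(v)=|D(v)\cap C_2|=0$. Hence in any star with root $w$ and leaf set $L$, one may \emph{split} the star into a single size-$2$ tree $\{w,\arg\max_{u\in L}\lambda_1(u)\}$ together with $|L|-1$ singleton leaves, without decreasing the optimum of the \wckpp LP. After splitting, every tree has at most one leaf, so the LP collapses to a two-constraint system (the budget and the single color-$2$ coverage constraint) whose extreme points have at most two fractional coordinates; when both non-bound constraints are tight those two fractions sum to an integer, so rounding up costs exactly one extra center.

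\textbf{The small-gap regime is not handled.} Your threshold $r_2=(4+\sqrt 5)r_1$ is far too large: in the range $\phi\, r_1 < r_2 < (4+\sqrt 5)r_1$ your ``single-layer variant'' is undefined, and a naive joint filter gives ratio $1+2r_2/r_1$, which can exceed $9$. The paper instead splits at the much smaller threshold $r_2\le \phi\, r_1$ with $\phi=\tfrac{1+\sqrt 5}{2}$ and, in that regime, simply invokes the non-priority Colorful $k$-Center pseudo-approximation of Bandyapadhyay et al., which already gives $k+1$ centers; the priority ratio is then $1+2r_2/r_1\le 1+2\phi=2+\sqrt 5$. For the large-gap regime the paper uses the \emph{ordinary} filter on $C_1$ (so $\ell_1=0$) and the modified filter on $C_2$ with $\ell=2r_1$; the resulting ratio for a $C_2$ client is $3+2r_1/r_2<3+2/\phi=2+\sqrt 5$, and for a $C_1$ client it is $3$. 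Your choice of positive $\ell_1$ and the larger $\ell_2$ is unnecessary and is what pushes your threshold out to $4+\sqrt 5$.
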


\begin{proof}
Let $r_1$ be the radius of vertices in $C_1$ and $r_2$ be that for $C_2$. Assume $r_1<r_2$. If $r_2\leq \frac{1+\sqrt{5}}{2} r_1$, we can use the algorithm introduced in~\cite{bandyapadhyay2019constant} for non-priority case, which yeilds $1+2\cdot \frac{1+\sqrt{5}}{2}=2+\sqrt{5}$ approximation.

If $r_2> \frac{1+\sqrt{5}}{2} r_1$, we use a similar algorithm to Algorithm~\ref{alg:modified-layering}. We run Algorithm~\ref{alg:filter} in $C_1$ and Algorithm~\ref{alg:modified-filter} in $C_2$ with additional parameter $2r_1$. Build the contact forest such that for all $u\in R_2$ and $v\in R_1$,
\[
    (u,v)\in E \Longleftrightarrow \exists f\in F: d(f,u) \leq r_u \text{ and } d(f,v) \leq r_v.
\]
Note that it is a forest since if there exist $u,v\in R_2$ and $w\in R_1$ such that $(u,w)\in E, (v,w)\in E$, then we have
\[
    d(u,v)\leq d(u,w)+d(w,v) = 2\cdot(r_1+r_2) = r_2+r_2+2r_1.
\]
which implies $u$ and $v$ should be in the same cluster.

For each tree of size at least $3$ in the forest, let $v$ be the root and $L$ be the set of leaves. Since $\lambda_2(v)=0$ for all $v\in L$, we can split the tree into a tree of size $2$ formed by $v$ and $\arg\max_{u\in L} \lambda_1(u)$, and $|L|-1$ singletons. Thus we only need to round up at least $2$ fractional leaves and we can use at most $k+1$ centers to satisfy the coverage requirement. The proof for the approximation ratio is analogous to that of Theorem~\ref{thm:2-radii}.
\end{proof}

The following algorithm is based on~\cite{anegg2022techniques}.

\begin{definition}[single color $(L,r)$-partition]
Let $(X=C\cup F, d)$ be a single color metric space. A partition $\mathcal{P}\subseteq 2^{C}$ is an $(L,r)$-partition if
\begin{itemize}
    \item $\text{diam}(A):=\max_{u,v\in A} d(u,v)\leq L\cdot r \quad \forall A\in P$
    \item For any $Z\subseteq F$, there exists a subfamily $\mathcal{A}\subseteq \mathcal{P}$ and injection $h:\mathcal{A}\to Z$ such that
    \begin{itemize}
        \item $d(A,h(A))\leq r \quad \forall A \in \mathcal{A}$
        \item $|\bigcup_{A\in \mathcal{A}} A| \geq |\{v\in C: d(v,Z)\leq r\}|$
    \end{itemize}
\end{itemize}
\end{definition}

\begin{theorem}
There is a $7$-approximation algorithm using at most $k$ centers for \upcks where there are only 2 colors.
\end{theorem}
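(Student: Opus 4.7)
The approach adapts the framework of Anegg, Iwamasa, and Kakimura~\cite{anegg2022techniques} for colorful $k$-center to the uniform-radius priority setting with two colors, exploiting the fact that in \upcks each color class $C_i$ has a single uniform radius $r_i$ so the within-color metric structure can be handled by the $(L, r)$-partition machinery just introduced. After scaling so that the optimal value equals $1$, the target is to produce $S \subseteq F$ with $|S| \leq k$ such that for each $i \in \{1,2\}$ at least $m_i$ clients $v \in C_i$ satisfy $d(v, S) \leq 7 r_i$.

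First, for each color $i$ I would construct an $(L=6, r_i)$-partition $\mathcal{P}_i$ of $C_i$. Solving the \pcks LP gives fractional $\cov$ values; a filtering procedure that processes $C_i$ in non-increasing $\cov$-order, greedily grouping all as-yet-unassigned color-$i$ clients within distance $3 r_i$ of the current representative, yields clusters of diameter at most $6 r_i$. The resulting partition satisfies the injection property of the definition: the pairwise representative separation together with the uniform radius implies that each facility can be ``responsible'' for at most one cluster in the required sense, so a Hall/matching argument produces the required injection $h$, with the LP-driven ordering ensuring that the matched clusters cover at least as many clients as are directly covered by $Z$.

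Next, I would invoke the two-color randomized approximation algorithm of~\cite{anegg2022techniques} as a black box on the auxiliary instance defined on $\mathcal{P}_1 \cup \mathcal{P}_2$, where the auxiliary ``metric'' encodes which cluster pairs admit a common serving facility within both radii and the coverage requirements are $m_1$ and $m_2$. Feasibility of the auxiliary instance at scale $1$ is witnessed by the fractional \pcks solution composed with the injection properties of $\mathcal{P}_1$ and $\mathcal{P}_2$. Their algorithm returns, in randomized polynomial time, an integer selection of at most $k$ serving facilities that covers enough clusters of each color. For each selected cluster $A$ of color $i$, we open a single facility $f \in F$ with $d(A, f) \leq r_i$; then any $v \in A$ satisfies $d(v, f) \leq \mathrm{diam}(A) + d(A, f) \leq 6 r_i + r_i = 7 r_i$, giving the claimed approximation.

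The main obstacle is verifying that the $(6, r_i)$-partitions interact cleanly enough that a black-box application of the Anegg et al.\ two-color rounding procedure succeeds with only $k$ facilities rather than the pseudo-approximation bound $k + 1$ obtained by our own LP-forest approach. Two colors is the boundary case where randomization can remove the extra $c - 1$ centers; the uniform-radius hypothesis is what lets the reduction go through cleanly, since each color contributes a single-radius sub-instance and shared facilities act exactly as in the non-priority colorful $k$-center problem studied in~\cite{anegg2022techniques}.
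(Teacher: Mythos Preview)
Your high-level plan matches the paper's: build $(6,r_i)$-partitions $\mathcal{P}_i$ of each color class and then use a randomized matching-based routine to select $k$ facilities. However, two concrete steps in your proposal do not work as written.

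First, the ``black-box'' call to the two-color algorithm of~\cite{anegg2022techniques} is not well-posed. That algorithm takes as input a single metric with a single covering radius and unit-weight clients; your auxiliary object has two distinct radii $r_1,r_2$ and clusters of varying sizes, so it is not an instance of the problem that black box solves. The paper does \emph{not} invoke~\cite{anegg2022techniques} as a black box here; it explicitly builds a bipartite graph on $\mathcal{P}_1 \cup \mathcal{P}_2$ with an edge $(A_1,A_2)$ whenever some $f\in F$ has $d(A_1,f)\le r_1$ and $d(A_2,f)\le r_2$, and then seeks a matching $M$ (allowing vertices matched to ``empty'') with $|M|\le k$, $\sum_{A_1\in M}|A_1|\ge m_1$, and $\sum_{A_2\in M}|A_2|\ge m_2$. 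Feasibility of such an $M$ is argued from an \emph{integral} optimum $Z$ via the injection property of the partitions (not from the LP), and finding $M$ is reduced to exact-weight perfect matching~\cite{maalouly2022exact} by padding with dummies and encoding $|A_1|,|A_2|$ into edge weights; this is where the randomized polynomial time enters. You need to supply this reduction rather than appeal to a black box.

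Second, your facility-opening rule ``for each selected cluster $A$ of color $i$, open a facility $f$ with $d(A,f)\le r_i$'' can open up to $2k$ facilities, since a single budget unit may serve one cluster of each color. The correct rule, and the one the paper uses, is to open \emph{one} facility per matched edge $e=(A_1,A_2)\in M$, namely the witness $f$ with $d(A_1,f)\le r_1$ and $d(A_2,f)\le r_2$; this is what keeps $|S|\le k$ while giving $d(v,f)\le \mathrm{diam}(A_i)+d(A_i,f)\le 6r_i+r_i=7r_i$ for every $v$ in a matched cluster.
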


\begin{proof}
Let $r_1$ be the radius of vertices in $C_1$ and $r_2$ be that for $C_2$. According to~\cite{anegg2022techniques}, we can find a $(6,r_1)$-partition $\mathcal{P}_1$ for $C_1$ and a $(6,r_2)$-partition $\mathcal{P}_2$ for $C_2$. 

Consider a bipartite graph $G=(V=\mathcal{P}_1\cup\mathcal{P}_2,E)$. For any $A_1\in \mathcal{P}_1$ and $A_2\in \mathcal{P}_2$, $(A_1,A_2)\in E$ iff there exists $f\in F$ such that $d(A_1,f)\leq r_1$ and $d(A_2,f)\leq r_2$. We aim to find a matching $M$ (a vertex can be matched to empty) in $G$ such that $|M|\leq k$, $\sum_{A_1\in \mathcal{P}_1,A_1\in M}|A_1|\geq m_1$ and $\sum_{A_2\in \mathcal{P}_2,A_2\in M}|A_2|\geq m_2$. Recall that we are working on the decision version of \upcks, we make the following claim:

\begin{claim}
Such a matching $M$ exists if an optimal solution exists.
\end{claim}

Suppose the optimal solution opens facilities $Z\subseteq F$. According to the property of $(L,r)$-partition, there exists a subfamily $\mathcal{A}_1\subseteq \mathcal{P}_1$ and injection $h_1:\mathcal{A}_1\to Z$, as well as a subfamily $\mathcal{A}_2\subseteq \mathcal{P}_2$ and injection $h_2:\mathcal{A}_2\to Z$. It is easy to see that the following matching is feasible (match to empty if $h_1^{-1}(f)$ or $h_2^{-1}(f)$ doesn't exist)
\[
    M=\bigcup_{f\in Z} (h_1^{-1}(f),h_2^{-1}(f)).
\]

Finding a feasible matching can be reduced the exact weight perfect matching problem. By creating $|\mathcal{P}_2|-k,|\mathcal{P}_1|-k$ dummy vertices on two sides of the bipartite graph respectively and add edges with $\infty$ weights, we can get rid of the constraint $|M|\leq k$. Since $m_1,m_2$ and all $|A_1|,|A_2|$ are polynomial in $n$, we can encode $|A_1|,|A_2|$ into edge weights. By enumerating $\sum_{A_1\in \mathcal{P}_1,A_1\in M}|A_1|$ and $\sum_{A_2\in \mathcal{P}_2,A_2\in M}|A_2|$, we can then determine an exact weight perfect matching, which can be solved in random polynomial time~\cite{maalouly2022exact}.

For every $e=(A_1,A_2)\in M$, open the facility indicated by $e$, i.e. some $f\in F$ such that $d(A_1,f)\leq r_1$ and $d(A_2,f)\leq r_2$. Consider any $v\in A_1$, we have
\[
    d(v,f)\leq \text{diam}(A)+d(A,f) \leq r_1+6r_1 = 7r_1.
\]
and similarly for any $v\in A_2$. Therefore, opening facilities indicated by every $e\in M$ forms a feasible solution within 7 times the radii.
\end{proof}

\section{Conclusions}
\label{sec:concl}
We improved the approximation for $\pkso$ from $9$ to $1+3\sqrt{3}$ via the natural LP. However, the known lower bound on the integrality gap is $3$.
Closing the gap is an interesting open problem. \cite{bajpai2022revisiting} obtained a $9$-approximation even under a matroid constraint on the chosen facilities. 
Is there a better approximation or lower bound for this more general problem? 

We considered \pcks and obtained a bi-criteria approximation that yields a $17$ approximation in the cost while violating the
number of centers by an additive bound of $2c-1$ ($c$ is the number of colors).  Can the approximation bound of $17$ be improved? Is there an $O(1)$ approximation for \pcks that does
not violate the number of centers when $c$ is a fixed constant? For the simpler colorful $k$-center problem there is a $3$-approximation
that does not violate the number of centers when $c$ is fixed constant \cite{jia2022fair,anegg2022}; however the running time of these algorithms is
exponential in $c$ and this is indeed necessary under the exponential time hypothesis \cite{anegg2022}.

\bibliographystyle{alpha}
\bibliography{main}

\end{document}